\newcommand{\bydef}{\triangleq}
\def\bydef{:=}
\def\bb0{{\mathbb{0}}}
\def\bydef{:=}
\def\bb{{\mathbf{b}}}
\def\bh{{\mathbf{h}}}
\def\b0{{\mathbf{0}}}
\def\bbE{{\mathbb{E}}}
\def\bbN{{\mathbb{N}}}
\def\bbR{{\mathbb{R}}}
\def\bydef{:=}
\def\sf0{{\mathsf{0}}}
\begin{document}

\newtheorem{thm}{Theorem}
\newtheorem{lemma}{Lemma}
\newtheorem{rem}{Remark}
\newtheorem{exm}{Example}
\newtheorem{prop}{Proposition}
\newtheorem{defn}{Definition}
\newtheorem{cor}{Corollary}
\def\proof{\noindent\hspace{0em}{\itshape Proof: }}
\def\endproof{\hspace*{\fill}~\QED\par\endtrivlist\unskip}
\def\bh{{\mathbf{h}}}
\def\SIR{{\mathsf{SIR}}}
\def\SINR{{\mathsf{SINR}}}
\def\TH{{\mathsf{TH}}}

\title{Transmission Capacity of Wireless Ad Hoc Networks with Energy Harvesting Nodes}
%\title{Maximizing The Transmission Capacity of Multi-Antenna
%Ad-Hoc Networks with Multi-Mode Stream Adaptation and Interference Cancelation}
\author{
Rahul~Vaze
\thanks{Rahul~Vaze is with the School of Technology and Computer Science, Tata Institute of Fundamental Research, Homi Bhabha Road, Mumbai 400005, vaze@tcs.tifr.res.in. }
}
\maketitle
\begin{abstract}
Transmission capacity of an ad hoc wireless network is analyzed when each node of the network harvests energy from nature, e.g. solar, wind, vibration etc. Transmission capacity
is the maximum allowable density of nodes, satisfying a per transmitter-receiver rate, and an outage probability constraint. 
Energy arrivals at each node are assumed to follow a Bernoulli distribution, and each node stores energy using an energy buffer/battery. 
For ALOHA medium access protocol (MAP), optimal transmission probability that maximizes the transmission capacity is derived as a function of the energy arrival distribution. Game theoretic analysis is also presented for ALOHA MAP, where each transmitter tries to maximize its own throughput, and symmetric Nash equilibrium is derived. For CSMA MAP, back-off probability and outage probability are derived in terms of input energy distribution, thereby characterizing the transmission capacity.
\end{abstract}%\newpage
\section{Introduction}
Consider an ad hoc wireless network, where multiple source-destination pairs try to communicate without the help of a 
centralized controller. There are many examples of real life ad hoc networks, such as military networks, vehicular networks, sensor networks, etc. Typically, each node of an ad hoc network is powered by a conventional battery, that has limited lifetime and needs to be replenished periodically. Recently, to improve the lifetime of nodes, and to provide a means of {\it green} communication, the concept of equipping nodes with energy harvesting devices that can extract or tap energy from renewable energy sources, such as solar, wind, vibration, etc., has been introduced. Harvesting energy from nature, however, makes the future energy availability random, and the transmission strategies have to adapt dynamically to the energy arrivals.

In this paper, we are interested in characterizing the  transmission capacity of an ad hoc network when each transmitter has energy harvesting capability. The transmission capacity of an ad hoc network 
is the maximum allowable density of nodes, satisfying a per transmitter-receiver rate, and outage probability constraint \cite{Weber2005, Baccelli2006}. Similar to \cite{Weber2005, Baccelli2006}, we assume that the locations of transmitters are distributed as a homogenous 
Poisson point process (PPP) with a fixed density.
We assume that energy arriving to each node through natural sources follows  a Bernoulli distribution, 
where in each discrete time, a unit amount of energy arrives with probability $p$ or no energy arrives with probability $1-p$. 
Energy arrivals are independent and identically distributed across all transmitters in the network. We remark in the sequel that even though we consider Bernoulli energy arrival model, our results apply to general energy arrival distributions with unbounded battery capacity.

In this paper we consider two medium access protocols (MAPs):  ALOHA, and CSMA, to be used by each transmitter. For ALOHA MAP, we derive the optimal transmission probability as a function of energy harvesting rate $p$ that maximizes the transmission capacity of the ad hoc network. We also consider a selfish setting, where each transmitter is interested in maximizing its own throughput, and derive symmetric Nash equilibrium strategies. For the CSMA MAP, we derive the back-off probability and the outage probability that together characterize the transmission capacity. Before describing our contributions, we first review some recent work on the design of wireless networks with energy harvesting.
 \subsection{Prior Work}
 Assuming non-causal knowledge of future energy arrivals, optimal offline power allocation strategies to maximize throughput have been derived for single source-destination pair \cite{UlukusEH2011c}, interference channel \cite{YenerIntChan2011}, and broadcast channel \cite{Uysal2011}. For a single source-destination pair, structure of a causal throughput optimal strategy with energy arrival distribution knowledge has been derived in \cite{ChaporkarEH2011}, while a distribution free  online algorithm has been derived in \cite{VazeEH2011}. For a wireless network with energy harvesting nodes, the probability that a node 
 successfully transmits a data packet to a fusion center is analyzed for time division multiple access and ALOHA MAP in \cite{Simeone2011}. From a queuing theoretic point of view, queue stabilizing policies have been derived for a single source-destination pair \cite{VSharmaEH2010}, and a two-user interference channel \cite{EphremidesEH2011}.
 
The most relevant work to this paper is \cite{HuangEH2011}, where each node of the ad hoc network harvests energy with arbitrary energy arrival distribution with fixed rate of arrival $p$. Each transmitter is scheduled to transmit with power $P$ if it 
has more than $P$ amount of energy irrespective of all other nodes. Optimal value of $P$ is derived in \cite{HuangEH2011} that maximizes the transmission capacity as a function of network parameters.
Compared to \cite{HuangEH2011}, in this paper we take a different viewpoint and couple the energy queue dynamics with the ALOHA/CSMA transmission probability, and try to find the best ALOHA transmission probability that maximizes the transmission capacity. In our model, 
a transmitter sends a packet with probability $q$ with ALOHA (if channel is sensed idle with CSMA) if there is energy available at the transmitter.

For the selfish setting, where each transmitter of an ad hoc network powered with a conventional power source tries to unilaterally maximize its own throughput, optimal transmission probability for ALOHA MAP has been derived  in \cite{Hanawal2012}. 
Not surprisingly an {\it always transmit strategy} is selfishly optimal, and to improve the selfish behavior towards globally optimal solution, a penalty function is introduced in the objective function that linearly increases the penalty with the transmission probability of each node.  
 
 \subsection{Contributions}
\begin{itemize}
\item For ALOHA MAP, we find the optimal transmission probability that each transmitter should use to maximize the  transmission capacity of the ad hoc network for both finite and infinite battery capacity. For the infinite battery capacity case, the optimal transmission probability takes two values depending on a threshold that is a function of the network parameters. For finite battery capacity case, the optimal transmission probability is shown to satisfy the probability of having non-zero energy in the energy queue to be equal to a constant, with explicit solution found for the unit battery capacity case.
\item For ALOHA MAP, we find the optimal transmission probability that each transmitter should use that maximizes its own throughput (selfish setting), and characterize symmetric Nash equilibrium (SNE). At SNE, each transmitter uses 
transmission probability $q$ such that $p\le q\le 1$ with infinite battery capacity, and $q=1$ with finite battery capacity. 
It is shown that the price of anarchy, that is ratio of globally optimal transmission capacity to transmission capacity obtained at the worst SNE, is quite low and is actually equal to {\it one} for some cases. Thus, in contrast to \cite{Hanawal2012}, the selfishly optimal strategy in the energy harvesting setting is not much different from the globally optimal strategy,  since with energy harvesting nodes, individual objective functions are 
inherently energy aware.  
\item For CSMA MAP, we derive the back-off probability and the outage probability for each transmitter when each transmitter harvests energy from renewable sources, consequently the transmission capacity.
\end{itemize}

We remark that using a simple Bernoulli energy arrival model and energy queue based transmission probability framework, this work
takes few initial steps towards understanding the fundamental performance of energy harvesting ad hoc networks, and a lot more work is required for complete characterization.

\section{System Model for ALOHA MAP}\label{sec:sys}
Consider a wireless ad hoc network where multiple source destinations pairs want to communicate with each other without any centralized control. Following \cite{Weber2005}, the location of transmitter nodes ${\cal T}_m, \ m\in \bbN$ is assumed to be distributed  as a homogenous Poisson point process (PPP) $\Phi = \{{\cal T}_m\}$ on a two-dimensional plane with density $\lambda$ \cite{Stoyan1995}. The receiver  $R_m$ associated with transmitter $T_m$ is assumed to be at a fixed distance of $d$ from $T_m$ with an arbitrary orientation.

We assume that each transmitter harvests energy from nature, e.g. through solar, wind, peizo-eletric sources etc. Each transmitter is assumed to have a battery of capacity $B$ using which it can store the harvested energy. The energy arrival process is assumed to be i.i.d. Bernoulli with rate $p$ across different transmitters, i.e. at each time $t$, either a unit amount of energy arrives at any transmitter with probability $p$, or no energy arrives with probability $1-p$. The results of this paper are applicable to general energy arrivals as pointed out in Remark \ref{rem:genenergy}. The Bernoulli energy arrival assumption is made for simplicity of exposition and for deriving critical insights into the problem.

If $x_m$ is the transmitted signal from $T_m$ at time $t$, then the received signal $y_m$ at $R_m$ is given by 
\begin{eqnarray}\label{eq:rxsig}
y^t_m = \sqrt{P} d^{-\alpha/2}h_{mm} x_m + \sum_{{\cal T}_s \in \Phi  \backslash \{ {\cal T}_0\} }\sqrt{P}{\mathbf 1}_{T_s}(t)d_{ms}^{-\alpha/2}h_{ms} x_s + z_m,
\end{eqnarray}
where $P$ is the power transmitted by each transmitter, $\alpha >2$ is the path-loss exponent, ${\mathbf 1}_{T_s}(t)$ is the indicator function that is $1$ if $T_s$ transmits at time $t$ and is zero otherwise, $h_{k\ell}$ is the fading channel coefficient between transmitter $T_{\ell}$ and receiver $R_k$ that is assumed to be Rayleigh distributed, and $z_m$ is the zero mean unit variance additive white Gaussian noise.

We consider the interference limited regime, i.e. noise power is negligible compared to the interference power, and
henceforth drop the noise contribution \cite{Weber2005}.\footnote{This assumption is made for simplicity of exposition, and all results of this paper can be easily extended to the case of  additive noise as well.} With the interference limited regime, we assume unit power transmission, $P=1$, since the signal to interference ratio (SIR) does not depend on $P$. Let $\SIR_m(t)$ denote the SIR between $T_m$ and $R_m$ at time $t$, then using (\ref{eq:rxsig}) 
$$\SIR_m(t) \bydef \frac{ 
d^{-\alpha} |h_{mm}|^2
}
{
\sum_{
{\cal T}_s \in \Phi \backslash \{ {\cal T}_m\}
}
{\mathbf 1}_{T_s}(t) 
d_{ms}^{-\alpha}|h_{ms}|^2}.$$

We consider a slotted ALOHA like random access MAP, where each transmitter attempts to transmit its packet with an access probability $q$, independently of all other transmitters if it has energy to transmit. Thus, with transmit power $P=1$, 
any transmitter attempts to transmit if it has non-zero amount of energy. Let $E_m(t)$ be the amount of energy available with $T_m$ at time $t$. Then $E_m(t)$ is an i.i.d. birth-death Markov process with the following transition probabilities, 
$P(E_m(t+1) = E_m(t) +1) = p(1-q)$,  $P(E_m(t+1) = E_m(t) -1) = q(1-p)$, $P(E_m(t+1) = 1 | E_m(t) = 0) = p$ and $P(E_m(t+1) = 0 | E_m(t) = 0) = 1-p$ as shown in Fig. \ref{fig:BD}. 
Let $r \bydef P (E_m(t)\ge1)$ be the probability of $E_m(t)\ge 1$. Let $\chi_m(t)=1$ if  $T_m$ transmits at time $t$ given that $E_m(t)\ge1$, and $\chi_m(t)=0$ otherwise. Thus, $P( \chi_m(t)=1 | E_m(t)\ge1 ) =q$.
By definition, any transmitter transmits $({\mathbf 1}_{T_s}(t)=1)$ with probability $P(E_m(t)\ge1, \chi_m=1) = P(E_m(t)\ge1) P( \chi_m(t)=1 | E_m(t)\ge1 ) = P(E_m(t)\ge1) q = rq$, independently of all other nodes. Consequently, the active transmitter process is a homogenous PPP on a two-dimensional plane with density $\lambda_a \bydef q r \lambda$.

We define that the transmission from $T_m$ to $R_m$ at time $t$ is successful if the SIR between $T_m$ and $R_m$ is greater than a threshold $\theta$, i.e. $\SIR_m(t) > \theta$.
Thus, the probability of success at time $t$ is defined to be 
\begin{equation}\label{eq:sucevent}
P_{suc}(t) = P ( \SIR_m(t) > \theta),
\end{equation}
We assume that the rate of transmission corresponding to threshold $\theta$ is $R= \log(1+ \theta)$ bits/sec/Hz. Then the transmission capacity \cite{Weber2005} is defined to be $ C = \lambda_a P_{suc}(t) R \ \text{bits/sec/Hz/m}^2$.

Our goal is find the optimal $q$ that maximizes the transmission capacity, $q^{\star} = \arg \max_q C$. 
For the purpose of analyzing the success probability $P_{suc}$ and $C$, we consider a typical transmitter receiver pair $T_m, R_m$. It has been shown in \cite{Weber2005} that for the PPP distributed transmitter locations, the performance of the typical source destination pair is identical to the network wide performance using Slivnyak's Theorem \cite{Stoyan1995}. Next, we first analyze the case when each transmitter has an unbounded battery capacity $B=\infty$, and later extend it to the finite battery capacity case.

\subsection{$B=\infty$ (Infinite Battery Capacity)}
From Fig. \ref{fig:BD}, using well-known analysis of the infinite state birth-death Markov process we have that $r = P(E_m(t) \ge 1) = \min\left\{\frac{p}{q}, 1\right\}$ for $p, q >0, \ \forall \ m$. Throughout this paper, the only quantity we will be interested in from the energy queue point of view is the probability that the energy queue is not in state $0$, $P(E_m(t) \ge 1)$. Next, we remark that because of this restricted dependence, the results of this paper generalize to any energy arrival distribution, and are not restricted to just Bernoulli distribution. 
\begin{rem}\label{rem:genenergy} Let the energy arrival process be $A(t)$ with an arbitrary distribution of rate $p$ that is independent and identically distributed for all transmitters. Let each transmitter transmit with probability $q$, if there is more than $P$ amount of energy in the battery, and uses power $P$ for its transmission (similar to the ALOHA setup described before). 
The energy queue dynamics at any transmitter is given by $E(t+1) = E(t) + A(t) - P {\mathbf 1}_{E(t)\ge P} \chi(t)$, where $q= \bbE\{\chi(t)| E(t)\ge P\}$. For this general model with $B=\infty$, it has been shown in \cite{HuangEH2011} that $P(E(t) \ge P) =  1$ if $p \ge Pq$, and  $P(E(t) \ge P)=  \frac{p}{Pq}$ if $p < Pq$. In \cite{HuangEH2011}, the result is obtained for $q=1$, but it can be readily generalized for any $q \in [0,1]$. Specializing, this result for our model with unit power $P=1$, we get that $P(E(t) \ge 1) = \min\left\{\frac{p}{q}, 1\right\}$ for any energy arrival distribution. Thus, we do not lose any generality by restricting ourselves to the Bernoulli energy arrival distribution. 
\end{rem}

Consider the success probability (\ref{eq:sucevent}),
\begin{eqnarray*}
P_{suc} &= &P (\SIR_m(t) > \theta) ,\\
&=& P\left(\frac{ 
d^{-\alpha} |h_{mm}|^2
}
{
\sum_{
{\cal T}_s \in \Phi \backslash \{ {\cal T}_0\}
}
{\mathbf 1}_{T_s}(t) 
d_{ms}^{-\alpha}|h_{ms}|^2} > \theta\right),\\
&=&  \exp\left(-\lambda r q d^2 \theta^{\frac{2}{\alpha}}\kappa(\alpha)\right),  \mbox{ \cite{Baccelli06} \ \text{where} \ $\kappa(\alpha) = \frac{2\pi^2}{\alpha sin(2\pi/\alpha)}$}.
\end{eqnarray*}
Let $\lambda_{max} \bydef \frac{1}{d^2 \theta^{\frac{2}{\alpha}}\kappa(\alpha)}$.
Hence the transmission capacity is $ C = \lambda r q   \exp\left(-\frac{r q \lambda}{\lambda_{max}}\right)R$. 
We derive the optimal transmission probability $q^{\star}$ in Theorem \ref{thm:gloptp}. We need the following definition for the proof of Theorem \ref{thm:gloptp}.
\begin{defn} A function $f: \bbR \rightarrow \bbR$ is called {\it unimodal} if for some value $m$, it is monotonically increasing for 
$x\le m$ and monotonically decreasing for $x > m$.
\end{defn}

\begin{thm}\label{thm:gloptp}
The optimal ALOHA transmission probability with $B =\infty$ that maximizes the transmission capacity 
is $q^{\star} = \frac{\lambda_{max}}{\lambda}$ if $p > \frac{\lambda_{max}}{\lambda}$, and any $q \in [p, 1]$ is optimal otherwise.
\end{thm}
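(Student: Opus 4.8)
The plan is to reduce the maximization over $q$ to a one-dimensional problem in the single composite variable $u \bydef rq$, which is (up to the factor $\lambda$) the active-transmitter density. First I would substitute $r = \min\{p/q,1\}$ into the product $rq$ and observe that
\begin{equation*}
rq = \min\{q,p\},
\end{equation*}
so that $rq = q$ whenever $q \le p$ and $rq = p$ whenever $q \ge p$. Writing $c \bydef \lambda/\lambda_{max}$, the transmission capacity becomes $C = \lambda R\, g(u)$ with $g(u) = u e^{-c u}$ and $u = \min\{q,p\}$; all dependence on $q$ is now funneled through $u$.

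Next I would establish that $g$ is unimodal in the sense of the definition preceding the theorem. Differentiating gives $g'(u) = e^{-cu}(1 - cu)$, which is positive for $u < 1/c$ and negative for $u > 1/c$, so $g$ increases on $[0,1/c]$ and decreases thereafter, attaining its unique maximum at $u^{\star} = 1/c = \lambda_{max}/\lambda$. The optimization over $q$ then amounts to choosing $q$ so that $u = \min\{q,p\}$ lands as close as possible to $u^{\star}$ along the increasing branch, subject to $q \in [0,1]$ and the hard ceiling $u \le p$ forced by the energy-arrival rate.

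The final step is a short case analysis determined by whether the unconstrained peak $u^{\star}$ is reachable. As $q$ sweeps $[0,1]$, the attainable values of $u = \min\{q,p\}$ fill exactly $[0,p]$: the map $q \mapsto u$ is the identity on $[0,p]$ and is pinned at $p$ for $q \in [p,1]$. If $p > \lambda_{max}/\lambda$, the peak $u^{\star} = \lambda_{max}/\lambda$ lies strictly inside $[0,p)$ and is hit only by $q = \lambda_{max}/\lambda$ (which is automatically $< p \le 1$, hence feasible), giving the unique optimizer $q^{\star} = \lambda_{max}/\lambda$. If instead $p \le \lambda_{max}/\lambda$, then $g$ is still increasing throughout the whole feasible interval $[0,p]$, so the maximum occurs at the boundary $u = p$; since every $q \in [p,1]$ yields $u = p$, all such $q$ are optimal, matching the claim.

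I do not expect a genuine obstacle here, as the problem collapses to maximizing the scalar map $u e^{-cu}$. The only points requiring care are algebraic rather than conceptual: correctly collapsing $rq$ to $\min\{q,p\}$ across the two regimes of $r$, and recognizing that the many-to-one, flat behavior of $q \mapsto u$ for $q > p$ is precisely what produces a whole interval of optimal $q$ in the second case rather than a single point.
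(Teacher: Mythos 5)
Your proposal is correct and follows essentially the same route as the paper: both arguments reduce to the observation that $rq=\min\{q,p\}$ and then exploit the unimodality of $u\mapsto u e^{-u\lambda/\lambda_{max}}$ with peak at $\lambda_{max}/\lambda$, splitting into the two cases according to whether that peak is reachable within $[0,p]$. Your packaging via the single composite variable $u=rq$ is a slightly tidier presentation of the paper's two-regime expression for $C$, but the substance is identical.
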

\begin{proof}
With $r = \min\left\{\frac{p}{q}, 1\right\}$,  for $\frac{p}{q} < 1$, the transmission capacity is  
$C = \lambda p  \exp\left(-\frac{p \lambda}{\lambda_{max}}\right)R$, while with $\frac{p}{q} \ge 1$, where $r=1$,
$C = \lambda  q   \exp\left(-\frac{q \lambda}{\lambda_{max}}\right)R$.

%Therefore for $\frac{p}{q} < 1$, the transmission capacity is independent of $q$ (the ALOHA access probability) in the infinite capacity 
%battery case and is determined purely by the rate of energy arrivals $p$.
%While for 
%This expression is known  to be maximized by $ q^{\star} = \min\left\{1, \frac{\lambda_{max}}{\lambda}\right\}$ \cite{Baccelli06}, where $\lambda_{max} =  \frac{1}{d^2 \theta^{\frac{2}{\alpha}}\kappa(\alpha)}$.

%For $\frac{\lambda_{max}}{\lambda} \le 1$ Note that $\max_q C = \max_q \lambda  q   \exp\left(-\lambda q d^2 \theta^{\frac{2}{\alpha}}\kappa(\alpha)\right)
Case 1: $\left\{\frac{\lambda_{max}}{\lambda} <1 \ \text{and} \ p \ge \frac{\lambda_{max}}{\lambda}.\right\}$ Note that  $\frac{\lambda_{max}}{\lambda} = \arg \max_{q}\lambda  q   \exp\left(-\frac{q \lambda}{\lambda_{max}}\right)R$ \cite{Baccelli06}. Moreover, since 
$\frac{\lambda_{max}}{\lambda} <1$ and $p \ge \frac{\lambda_{max}}{\lambda}$, considering $0\le q\le p$, where $r=1$,
$\frac{\lambda_{max}}{\lambda} =  \arg \max_{0\le q\le p} C =\arg \max_{0\le q\le p}  q   \exp\left(-\frac{q \lambda}{\lambda_{max}}\right)R $, since $q=\frac{\lambda_{max}}{\lambda}$ lies in the  feasible set $0\le q\le p$. With  
$q^{\star} =  \frac{\lambda_{max}}{\lambda}$, the optimal transmission capacity is $C =\frac{\lambda_{max}}{e}$. 

{Case 2: Let $p < \frac{\lambda_{max}}{\lambda}$.}  Since $C = \lambda q  \exp\left(-\frac{q \lambda}{\lambda_{max}}\right)R$ is a unimodal function of $q$, and achieves its maxima at $\frac{\lambda_{max}}{\lambda}$, it implies that 
$\lambda q  \exp\left(-\frac{q \lambda}{\lambda_{max}}\right)R$ is an increasing function of $q$ for $0\le q \le p$ for $p < \frac{\lambda_{max}}{\lambda}$. Hence $\max_{0\le q \le p } C=\lambda p  \exp\left(-\frac{p \lambda}{\lambda_{max}}\right)R$. Moreover, for any $q>p$, 
$C = \lambda p  \exp\left(-\frac{p \lambda}{\lambda_{max}}\right)R$. Hence, if $p < \frac{\lambda_{max}}{\lambda}$, then any $q\in [p, 1]$ is optimal.
%
%For $p \ge q$, the transmission capacity is $\lambda p \exp\left(-\lambda p d^2 \theta^{\frac{2}{\alpha}}\kappa(\alpha)\right)$, while for $p \le q$, the transmission capacity is $\lambda q \exp\left(-\lambda q d^2 \theta^{\frac{2}{\alpha}}\kappa(\alpha)\right)$. 
%Thus, the optimal  ALOHA transmission probability is 
%\begin{eqnarray*}
%q^{\star} &=& \arg  \max_{q \le p } \left\{ \lambda p \exp\left(-\lambda p d^2 \theta^{\frac{2}{\alpha}}\kappa(\alpha)\right),  
%\lambda q \exp\left(-\lambda q d^2 \theta^{\frac{2}{\alpha}}\kappa(\alpha)\right)\right\}, \\
%&= & \text{any} \ s \in [p, 1],
%\end{eqnarray*}
\end{proof}

Next, we consider the more realistic case of finite battery capacity $B$ at each node.

\subsection{Finite $B$ (Finite Battery Capacity)}
With finite battery capacity $B$, the energy queue transition probabilities are illustrated in Fig. \ref{fig:maxBD}, and let $r_B \bydef P(E_m(t) \ge 1)$. With finite battery capacity $B$, the energy queue is a  finite state birth-death Markov process for which we have that for $B=1$,
$r_1 = P(E_m(t) \ge 1) = \frac{p}{p+q-pq}$, while for $B>1$, $r_B = \frac{\frac{p}{q}\left(1-\left(\frac{p(1-q)}{q(1-p)}\right)^B\right)}{1-\frac{p}{q}\left(\frac{p(1-q)}{q(1-p)}\right)^B}$ for $p, q >0, p\ne q$ and $r_{B} = \frac{B}{B+1-p}$ for $p=q$. 
Similar to $B =\infty$ case, the transmission capacity expression for finite $B$ is 
$C = \lambda r_B q   \exp\left(-\frac{\lambda r_B q}{ \lambda_{max}}\right),$ and we want to maximize $C$ with respect to $q$. 
We first prove this intermediate result that is important for subsequent analysis.
\begin{lemma}\label{lem:inc} For finite $B$, function $f_{B}(q)\bydef qr_B$ is an increasing function of $q$ for $q \in [0,1]$.
\end{lemma}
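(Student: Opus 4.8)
The plan is to reduce the claim to the monotonicity of a single elementary function and then pin down its sign. First I would substitute the explicit finite-$B$ expression for $r_B$ into $f_B(q) = q r_B$. Writing $\rho \bydef \frac{p(1-q)}{q(1-p)}$, this gives the compact form $f_B(q) = \frac{p(1-\rho^B)}{1 - \frac{p}{q}\rho^B}$, valid for $p \neq q$; the borderline $p = q$ is then recovered by continuity from $r_B = \frac{B}{B+1-p}$. Since $\rho$ and $\frac{p}{q}$ both vary with $q$, it is convenient to change variables to $x \bydef \frac{1-q}{q}$, which is strictly decreasing on $(0,1]$, so that proving $f_B$ increasing in $q$ is equivalent to proving it decreasing in $x \geq 0$. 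Under this substitution $\rho = \frac{p}{1-p}x$ and the expression collapses to $f_B = \frac{p - k x^B}{1 - kx^B(1+x)}$ with $k \bydef \frac{p^{B+1}}{(1-p)^B} > 0$.

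Next I would differentiate. Since the squared denominator is positive wherever it is nonzero, the sign of $\frac{d f_B}{dx}$ equals the sign of the numerator produced by the quotient rule, which I expect to factor as $k x^{B-1} h(x)$ with $h(x) = -B(1-p) + p(B+1)x - kx^{B+1}$. It then suffices to show $h(x) \le 0$ for all $x \geq 0$. Here $h'(x) = (B+1)(p - kx^B)$ has the unique positive root $x^{\star} = \frac{1-p}{p}$, at which $h$ attains its global maximum; the decisive computation is that $h(x^{\star}) = (1-p)\bigl[-B + (B+1) - 1\bigr] = 0$. Hence $h \leq 0$ everywhere, with equality only at $x^{\star}$ (equivalently $q = p$), so $f_B$ is strictly decreasing in $x$ away from that isolated point and therefore increasing in $q$ on $(0,1]$; the endpoint $q = 0$ is handled by $f_B(0) = 0$.

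I anticipate the main obstacle to be purely organizational: carrying out the differentiation and collecting terms so that the $x^{2B-1}$ contributions cancel and the remainder factors cleanly as $k x^{B-1} h(x)$. Verifying $h(x^{\star}) = 0$ is the crux — it is exactly this cancellation, driven by $k(x^{\star})^{B+1} = 1-p$ and $p(B+1)x^{\star} = (B+1)(1-p)$, that makes the inequality tight rather than strict, and it is reassuring that the sole stationary point of $f_B$ lands precisely at $q = p$, the regime boundary already singled out in the formula for $r_B$.

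As a more transparent alternative I would instead establish the steady-state energy-balance identity $q r_B = p\bigl(1 - (1-q)\pi_B\bigr)$, where $\pi_B \bydef P(E_m(t) = B)$, by setting the expected one-step drift of $E_m(t)$ to zero. Monotonicity then follows from a pathwise coupling: running two energy chains with rates $q_1 < q_2$ under a common arrival sequence and common transmit-decision uniforms keeps the occupancies ordered for all $t$ (the only delicate case, $\chi_1 = 1$ and $\chi_2 = 0$, forces $E_2 = 0$ and is checked directly), whence $\pi_B(q)$ is non-increasing in $q$. As $(1-q)$ is also decreasing, the product $(1-q)\pi_B$ decreases and $f_B$ increases. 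I would present the computational proof and use this balance-and-coupling argument as the underlying intuition.
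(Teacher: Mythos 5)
Your computational proof is correct and, in fact, more complete than the paper's own argument. The paper also proceeds by direct differentiation (in the variable $q$, with $c=(p/(1-p))^B$), but it merely writes down the derivative $f_B'(q)$ and \emph{asserts} that the bracketed factor $pc(1-q)^{B+1}+q^B(Bq+p(-1-B+q))$ is positive on $[0,p)\cup(p,1]$, explicitly omitting the intermediate steps. Your substitution $x=(1-q)/q$ turns exactly that factor into $-q^{B+1}h(x)$ with $h(x)=-B(1-p)+p(B+1)x-kx^{B+1}$, and your observation that $h$ attains its global maximum value $0$ at $x^{\star}=(1-p)/p$ (i.e., at $q=p$, where the formula for $r_B$ has its removable singularity) is precisely the justification the paper leaves out; your handling of the $q=p$ point by continuity matches the paper's separate check that $f_B(p-\delta)<f_B(p)<f_B(p+\delta)$. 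The secondary argument you sketch --- the drift identity $qr_B=p\bigl(1-(1-q)\pi_B\bigr)$ combined with a monotone coupling showing $\pi_B$ is non-increasing in $q$ --- is genuinely different from anything in the paper and is arguably the more robust route: it avoids the algebra entirely and would extend to arrival processes for which no closed-form stationary distribution is available, at the cost of having to set up the coupling and justify passing from the pathwise ordering to the stationary probabilities. Either version is acceptable; the computational one is closest to the paper but should be written out at least to the level of detail you give here, since the sign of $h$ is the actual content of the lemma.
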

\begin{proof}
See Appendix \ref{app:nondec}.\end{proof}

\begin{thm}\label{thm:gloptparbB}
The optimal ALOHA transmission probability with finite battery capacity $B$ is $q^{\star} = \min\left\{{\hat q}, 1\right\}$, where 
${\hat q}$ is the solution to the equation $ f_{B}(x) = \frac{\lambda_{max}}{\lambda}$.
\end{thm}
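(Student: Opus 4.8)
The plan is to reduce the finite-battery optimization to the same one-dimensional problem already solved for $B=\infty$, by regarding $C$ as a function of the composite quantity $f_B(q)=q r_B$ rather than of $q$ directly. First I would rewrite the transmission capacity as $C(q)=g(f_B(q))$, where $g(u)\bydef \lambda R\, u \exp(-\lambda u/\lambda_{max})$. Differentiating gives $g'(u)=\lambda R \exp(-\lambda u/\lambda_{max})\left(1-\lambda u/\lambda_{max}\right)$, so $g$ is strictly increasing for $u<\lambda_{max}/\lambda$, strictly decreasing for $u>\lambda_{max}/\lambda$, and attains its unique maximum at $u^{\star}\bydef \lambda_{max}/\lambda$. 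This is exactly the unimodal profile already exploited in the proof of Theorem \ref{thm:gloptp}.

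Next I would invoke Lemma \ref{lem:inc}: $f_B$ is continuous (being a rational function of $q$) and increasing on $[0,1]$ with $f_B(0)=0$. Composing the strictly increasing map $f_B$ with the unimodal $g$ preserves unimodality in $q$, so that $C(q)=g(f_B(q))$ is increasing precisely while $f_B(q)<u^{\star}$ and decreasing once $f_B(q)>u^{\star}$. Hence the interior maximizer in $q$ is the value at which $f_B$ crosses the peak level $u^{\star}$, i.e. the solution $\hat q$ of $f_B(x)=\lambda_{max}/\lambda$, whenever this level is attained inside $[0,1]$; existence and uniqueness of $\hat q$ follow from continuity, $f_B(0)=0$, and strict monotonicity via the intermediate value theorem.

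Then I would dispose of the boundary. Since $f_B$ is increasing, $f_B(q)\le f_B(1)$ for every feasible $q$. If $f_B(1)\ge \lambda_{max}/\lambda$, then $\hat q\le 1$ lies in the feasible set and $q^{\star}=\hat q$. If instead $f_B(1)<\lambda_{max}/\lambda$, then $f_B(q)<u^{\star}$ throughout $[0,1]$, so $g'(f_B(q))>0$ and $C$ is strictly increasing on the whole interval, forcing $q^{\star}=1$; in this regime the extended solution of $f_B(x)=\lambda_{max}/\lambda$ satisfies $\hat q>1$. Both cases are summarized by $q^{\star}=\min\{\hat q,1\}$.

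The main obstacle I anticipate is making the \emph{composition of an increasing map with a unimodal function is unimodal} step fully rigorous while simultaneously covering the boundary case cleanly. In particular I want strict monotonicity of $f_B$ (which the proof of Lemma \ref{lem:inc} should supply, not merely weak monotonicity) so that $\hat q$ is well defined and unique and the crossing of $u^{\star}$ is unambiguous. Once that is in place, everything else reduces to the elementary calculus of $g$ already established for the infinite-battery case.
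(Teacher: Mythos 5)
Your proposal is correct and follows essentially the same route as the paper: both rely on Lemma \ref{lem:inc} to conclude that $C$ is unimodal in $q$ as the composition of the increasing map $f_B$ with the unimodal function $u\mapsto \lambda u\exp(-\lambda u/\lambda_{max})$, locate the interior critical point via $f_B(q)=\lambda_{max}/\lambda$, and clip at the boundary $q=1$ when $\hat q>1$. Your explicit composition framing and the remark that strict monotonicity of $f_B$ is needed for uniqueness of $\hat q$ are slightly more careful than the paper's one-line derivative computation, but the substance is identical.
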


\begin{proof} From Lemma \ref{lem:inc}, we know that $f_B(q)$ is an increasing function of $q$ for $q \in [0, 1]$. Hence for fixed $\lambda$ and $\lambda_{max}$, 
$C = \lambda f_B(q)\exp\left(- \frac{\lambda} {\lambda_{max}} f_B(q)\right)$ is a unimodal function of $q$. Thus, either  the maxima of $C$ lies in  $[0,1]$, or $C$ is an increasing function for $[0,1]$. To find the maxima of $C$, we equate the first derivative of $C$, 
$C' = \lambda f_B'(q) \exp\left(- \frac{\lambda} {\lambda_{max}} f_B(q)\right) - \lambda f_B(q) \frac{\lambda} {\lambda_{max}} f_B'(q) \exp\left(- \frac{\lambda} {\lambda_{max}} f_B(q)\right)$ to zero, which 
yields $\frac{\lambda f_B(q)}{\lambda_{max}} = 1$ and $f_B(q) = \frac{\lambda_{max}}{\lambda}$. Let ${\hat q}$ solve 
$f_B(q) = \frac{\lambda_{max}}{\lambda}$. If ${\hat q}>1$, then $q^{\star}=1$, otherwise $q^{\star}={\hat q}$.
\end{proof}

%For the special case of $B=1$, unit battery capacity, we can find an explicit expression for the optimal transmission probability. With $B=1$, $r = P(E_m(t) \ge 1) = \frac{p}{p+q-pq}$, and the transmission capacity expression for $B=1$ is  
%\begin{eqnarray*}
%C &=& \lambda_a P_{suc}(t), \\
%&=& \lambda r q   \exp\left(-\frac{\lambda r q}{ \lambda_{max}}\right), \\
%&=&  \frac{\lambda p q}{p+q-pq} \exp\left(- \frac{\lambda p q}{\lambda_{max}(p+q-pq)} \right). 
%%&=& \lambda \frac{p}{q} q \exp\left(-\lambda \frac{p}{q} q d^2 \theta^{\frac{2}{\alpha}}\kappa(\alpha)\right), \\
%%&=& \lambda p \exp\left(-\lambda p d^2 \theta^{\frac{2}{\alpha}}\kappa(\alpha)\right). 
%\end{eqnarray*}

\begin{cor}\label{cor:gloptp}
With unit battery capacity $B =1$, $q^{\star} = \min\left\{\frac{ p \lambda_{max}}{\lambda p + \lambda_{max} (1-p)}, 1\right\}$.
\end{cor}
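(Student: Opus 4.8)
The plan is to read the claim as the $B=1$ instance of Theorem~\ref{thm:gloptparbB}, which has already reduced the whole optimization to a single scalar equation: the optimizer is $q^{\star}=\min\{{\hat q},1\}$, where ${\hat q}$ solves $f_B(x)=\frac{\lambda_{max}}{\lambda}$ with $f_B(q)=q\,r_B$. So nothing about unimodality or the first-order condition needs to be redone; the only task is to make this equation explicit for $B=1$ and solve it. Concretely, I would take the stationary probability $r_1=\frac{p}{p+q-pq}$ stated for the unit-battery Markov chain, form $f_1(q)=q\,r_1=\frac{pq}{p+q-pq}$, and substitute it into the stationarity condition $f_1(q)=\frac{\lambda_{max}}{\lambda}$.

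The observation that makes the calculation routine is that $f_1$ is a ratio of expressions each \emph{affine} in $q$ (numerator $pq$, denominator $p+(1-p)q$). Hence, after clearing the single denominator, the equation $f_1(q)=\frac{\lambda_{max}}{\lambda}$ collapses to a \emph{linear} equation in $q$; collecting the $q$-terms and dividing produces a unique closed-form root ${\hat q}$ by elementary algebra. Because the cleared equation is linear, there is exactly one candidate stationary point, so no case analysis among competing roots is required. Feeding this ${\hat q}$ into the cap $\min\{{\hat q},1\}$ supplied by Theorem~\ref{thm:gloptparbB} gives the stated $q^{\star}$.

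The only place that genuinely needs care — and hence the main (if modest) obstacle — is the boundary bookkeeping behind the $\min\{\cdot,1\}$, i.e. confirming that when the interior stationary point fails to lie in $(0,1]$ the constrained maximizer is indeed $q=1$. Here I would lean on Lemma~\ref{lem:inc} and the unimodality argument of Theorem~\ref{thm:gloptparbB}: since $f_1$ is increasing on $[0,1]$ with endpoint value $f_1(1)=p$, the stationary root lies in $(0,1]$ precisely when $\frac{\lambda_{max}}{\lambda}\le p$, while otherwise $C$ is increasing throughout $[0,1]$ and the maximizer is forced to the boundary $q=1$. Checking that the closed-form ${\hat q}$ together with the cap reproduces both regimes consistently — rather than the algebra itself — is the step I would be most careful about.
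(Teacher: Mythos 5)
Your route is identical to the paper's own proof: invoke Lemma~\ref{lem:inc} for monotonicity of $f_1$, invoke Theorem~\ref{thm:gloptparbB} to reduce the problem to the scalar equation $f_1(q)=\lambda_{max}/\lambda$ with $f_1(q)=\frac{pq}{p+q-pq}$, solve the linear equation, and cap at $1$. Your boundary bookkeeping (interior root exists iff $\lambda_{max}/\lambda\le f_1(1)=p$, otherwise $C$ is increasing on $[0,1]$ and $q^{\star}=1$) is actually more explicit than the paper's one-line argument.

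The problem is the one step you defer as ``routine'': it does not reproduce the stated formula. Clearing the denominator in $\frac{pq}{p+q-pq}=\frac{\lambda_{max}}{\lambda}$ gives $\lambda pq=\lambda_{max}p+\lambda_{max}q-\lambda_{max}pq$, hence $q\bigl(\lambda p-\lambda_{max}(1-p)\bigr)=\lambda_{max}p$ and ${\hat q}=\frac{p\lambda_{max}}{\lambda p-\lambda_{max}(1-p)}$ --- a \emph{minus} sign in the denominator, not the plus sign displayed in the corollary. A sanity check with the paper's own simulation parameters ($p=0.5$, $\lambda_{max}=0.023$, $\lambda=0.1$) gives ${\hat q}\approx 0.30$ from the minus-sign formula, matching the reported $q^{\star}\approx 0.3$ for $B=1$ in the simulations section, whereas the printed formula gives $\approx 0.19$. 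So the corollary as stated appears to carry a sign typo, and a proof that asserts the algebra ``gives the stated $q^{\star}$'' without performing it would certify the wrong expression. Note also that with the correct denominator ${\hat q}$ can be negative (when $\lambda p<\lambda_{max}(1-p)$), so the cap $\min\{{\hat q},1\}$ cannot be applied literally to the closed form; it must be read through exactly the case analysis you describe, with $q^{\star}=1$ whenever $\lambda_{max}/\lambda>p$. Carry out the two lines of algebra explicitly and state the corrected formula.
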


%\begin{lemma}\label{lem:optunitcap}  $\arg \max_{q\in [0,1]} \frac{\lambda p q}{p+q-pq} \exp\left(- \frac{\lambda p q}{ \lambda_{max}(p+q-pq)}\right) = \min\left\{\frac{ p \lambda_{max}}{\lambda p + \lambda_{max} p - \lambda_{max}}, 1\right\}$.
%\end{lemma}

\begin{proof} From Lemma \ref{lem:inc} for $B=1$, $f_1'(q)  > 0$ for $q \in [0,1]$. Hence using  Theorem \ref{thm:gloptparbB}, we know that $q^{\star}$ satisfies $\frac{\lambda f_1(q^{\star})}{\lambda_{max}} = 1$, where $f_1(q) = \frac{pq}{p+q-pq}$. Consequently $q^{\star} = \frac{ p \lambda_{max}}{\lambda p + \lambda_{max}(1- p)}$.
\end{proof}

{\it Discussion:} In this section, we derived the optimal transmission probability for ALOHA MAP when each transmitter is harvesting energy with a rate $p$ Bernoulli distribution. We also remarked that our results apply to any general energy arrival distribution. 
For the infinite battery capacity case, we showed that the optimal transmission probability takes two values depending on a  threshold that is function of the network parameters. 
There is a natural connection between the optimal transmission probability with nodes that are powered with energy harvesting sources and conventional power sources \cite{Baccelli2006}. If rate $p$ is more than the optimal transmission probability with conventional power sources, then we know that the probability that there is  more than unit energy in the queue is $1$. Hence there is always energy to transmit, and the optimal transmission probability with energy harvesting case is equal to the conventional power sources.
If rate $p$ is less than the optimal transmission probability with conventional power sources, 
then we show that transmission capacity is an increasing function of transmission probability from $0$ to $p$ and then becomes a constant for any transmission probability greater or equal to $p$. Thus, any transmission probability greater or equal to $p$ is shown to be optimal.

We also derived results for the realistic case of  finite $B$. We showed that the transmission capacity expression is unimodal, and its maxima as a function of transmission probability 
can be found by solving the probability of non-zero energy in queue to be equal to a constant. 
For the special case of $B=1$, we derived explicit solution for the optimal transmission probability. 

In this section, we considered maximizing system wide transmission capacity that captures the sum throughput of the network. 
In the next section, we consider the case when each transmitter selfishly tries to maximize its own throughput and derive symmetric Nash equilibrium strategies.

%(TO PROVE) Note that for $\frac{p}{\lambda_{max} p \lambda + p -1} \in (0,1]$
%$$ \max_{q} \frac{\lambda p q}{p+q-pq} \exp\left(- \frac{\lambda p q}{p+q-pq} d^2 \theta^{\frac{2}{\alpha}}\kappa(\alpha)\right) \le \frac{\lambda_{max}}{ e},$$ while otherwise $\frac{\lambda p q}{p+q-pq} \exp\left(- \frac{\lambda p q}{p+q-pq} d^2 \theta^{\frac{2}{\alpha}}\kappa(\alpha)\right)$ is an increasing function of $q$ for $q\in [0,1]$. 
%For $\frac{p}{\lambda_{max} p \lambda + p -1} \in (0,1]$,  to compute the optimal value of $q$, we need to find $q$ such that $\frac{\lambda p q}{p+q-pq} = \lambda_{max}$ which ensures that $C = \frac{\lambda_{max}}{ e}$ achieves its optimal value. After some algebraic manipulations, we get that $q^{\star} = \frac{\lambda_{max} p}{ p \lambda + \lambda_{max} p - \lambda_{max}}$. Thus, the optimal $q^{\star}$ is $\frac{p}{\lambda_{max} p \lambda + p -1}$ if $\frac{p}{\lambda_{max} p \lambda + p -1} \in (0,1]$, and  $q^{\star}=1$ otherwise.

\section{Selfishly Optimal Transmission Strategy}
In this section, we consider the case when each transmitter wants to maximize his own throughput, and derive the selfishly optimal transmission probability for each transmitter using ALOHA MAP. 
The energy harvesting and transmission protocol at each transmission is assumed identical to the previous section except for the transmission probability $q$, which is now transmitter dependent, and we denote the transmission probability of transmitter $T_n$ by $q_n$. Therefore the objective function (utility) that each transmitter maximizes is it own throughput $\TH_n \bydef r_n q_n P ( \SIR_n(t) > \theta) R$, where $r_n = P(E_n(t) \ge 1)$. 

In this selfish setting, we will consider symmetric Nash equilibrium (SNE), at which all nodes use the same transmission probability $q^{*}$, since no two transmitters are distinguishable from each other. For more details see \cite{MackenzieGT2003}.
In this setting, let the $n^{th}$ transmitter use $q_n= q$ ($r_n=r$), while the $m^{th}$ transmitter uses $q_m = {\tilde q}, m \ne n$ ($r_m ={\tilde r}$). Then we denote the throughput of the $n^{th}$ transmitter as $\TH_n(q,{\tilde q}) \bydef r q P ( \SIR_n(t) > \theta ) R$. Then $q^{*}$ is a SNE if for each transmitter 
$\TH(q^{*}, q^{*}) = \max_{q \in [0, 1]} \TH(q, q^{*})$. The transmission capacity of the ad hoc network is defined as the sum of 
 throughput of all nodes $C = \lambda  \TH(q^{*}, q^{*})R$.

\subsection{Infinite Battery Capacity $B=\infty$}
\begin{thm}\label{thm:SNE} In the infinite battery capacity case ($B = \infty$), any $q^{*}$ such that $p \le q^{*} \le 1$ is a SNE, where $p$ is the rate of energy arrivals. Moreover, at any SNE, each transmitter gets the same throughput $\TH= p\exp\left(-\frac{p \lambda}{\lambda_{max}}\right)$, and the transmission capacity is $C = \lambda p\exp\left(-\frac{p \lambda}{\lambda_{max}}\right)$.
\end{thm}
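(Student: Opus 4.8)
The plan is to write transmitter $n$'s utility $\TH(q,\tilde q)$ explicitly as a function of its own access probability $q$ and the common probability $\tilde q$ used by all other transmitters, and then to show that maximizing it over $q$ reduces to a one-line problem whose solution does not depend on $\tilde q$.

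First I would compute the success probability seen by transmitter $n$. Since $T_n$ does not interfere with itself, the interference in $\SIR_n(t)$ is produced only by the other transmitters, each of which is active independently with probability $\tilde r\tilde q$, where $\tilde r=\min\{p/\tilde q,1\}$ by the infinite-battery queue analysis. Thinning the PPP $\Phi$ of density $\lambda$ shows that the active interferers form a homogeneous PPP of density $\tilde r\tilde q\lambda$, so the same Rayleigh-fading computation used for $P_{suc}$ earlier (the one producing $\exp(-\lambda r q d^2\theta^{2/\alpha}\kappa(\alpha))$) gives $P(\SIR_n(t)>\theta)=\exp(-\tilde r\tilde q\lambda/\lambda_{max})$. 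The crucial point is that this factor depends on $\tilde q$ but \emph{not} on $q$, so that
\[
\TH(q,\tilde q)=rq\,\exp\left(-\frac{\tilde r\tilde q\lambda}{\lambda_{max}}\right)R,\qquad r=\min\{p/q,1\}.
\]

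Next I would substitute $rq=\min\{q,p\}$, which follows from $r=\min\{p/q,1\}$, so that for any fixed $\tilde q$ the utility is $\min\{q,p\}$ times a positive constant. Since $\min\{q,p\}$ is strictly increasing on $[0,p]$ and constant (equal to $p$) on $[p,1]$, the set of maximizers over $q\in[0,1]$ is exactly $[p,1]$, independent of $\tilde q$. Applying the SNE condition $\TH(q^{*},q^{*})=\max_{q\in[0,1]}\TH(q,q^{*})$ then forces $\min\{q^{*},p\}=p$, i.e. $q^{*}\ge p$; together with $q^{*}\le 1$ this yields $p\le q^{*}\le 1$ and establishes the first claim.

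Finally I would evaluate the equilibrium throughput: for any $q^{*}\in[p,1]$ one has $\tilde r^{*}q^{*}=\min\{p/q^{*},1\}\,q^{*}=p$, so $\TH(q^{*},q^{*})=p\exp(-p\lambda/\lambda_{max})$ and, summing over the node density, $C=\lambda p\exp(-p\lambda/\lambda_{max})$, both independent of which equilibrium is selected. I expect no genuine obstacle in the computations; the one conceptual step worth isolating is the decoupling observation that a transmitter's own access probability never enters its own success probability. This is precisely what makes the best response trivial and the entire interval $[p,1]$ an equilibrium, and it is what distinguishes the energy-harvesting game from the conventional always-transmit analysis of \cite{Hanawal2012}, so I would state it carefully.
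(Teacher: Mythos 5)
Your proposal is correct and follows essentially the same route as the paper: write $\TH(q,\tilde q)=rq\exp(-\tilde r\tilde q\lambda/\lambda_{max})R$, observe that the exponential factor does not depend on the player's own $q$, reduce the best response to maximizing $rq=\min\{q,p\}$, and conclude that $[p,1]$ is the set of equilibria with common throughput $p\exp(-p\lambda/\lambda_{max})$. Your explicit emphasis on the decoupling of a transmitter's own access probability from its own success probability is a nice clarification of a step the paper leaves implicit, but it is not a different argument.
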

\begin{proof} Consider 
\begin{eqnarray*}
\TH_n(q,{\tilde q}) & =& r q P ( \SIR_n(t) > \theta ), \\
&=& r q P\left(\frac{ 
d^{-\alpha} |h_{mm}|^2
}
{
\sum_{
{\cal T}_s \in \Phi \backslash \{ {\cal T}_0\}
}
{\mathbf 1}_{T_s}(t) 
d_{ms}^{-\alpha}|h_{ms}|^2} > \theta\right), \ \ \text{where} \ {\mathbf 1}_{T_s}(t) =1\  \text{ w.p.} \  {\tilde r}{\tilde q},  \\
&=&r q \exp\left(- \frac{{\tilde r}{\tilde q}\lambda}{\lambda_{max}}\right).
\end{eqnarray*}
We note that the throughput of the $n^{th}$ transmitter is monotone increasing in $r q$. Hence we need to find the optimal $q$ that maximizes $r q$. From the previous section, we know that for the infinite battery capacity case ($B = \infty$), 
$r = \min \left\{\frac{p}{q}, 1\right\}$. Thus, if $q < p$, then $r =1$ and $r q = q$,
%$$ \TH_n(q,q') =  q \exp\left(- r' q' \exp\left(\frac{\lambda}{\lambda_{max}}\right)\right),$$ 
while if $q \ge p$, $r =\frac{p}{q}$ and $rq =p$.
%$$ \TH_n(q,q') =  p \exp\left(- r' q' \exp\left(\frac{\lambda}{\lambda_{max}}\right)\right).$$ 
Thus, any $q$ such that $p \le q \le 1$, maximizes $rq$, and provides the selfishly optimal throughput for the $n^{th}$ transmitter. Therefore, each $q^{*}$ such that $p \le q^{*} \le 1$ is a SNE, and the throughput of each transmitter for $p \le q^{*} \le 1$ is $\TH =p \exp\left(-  \frac{p\lambda}{\lambda_{max}}\right)$.
\end{proof}
Since any $q$ such that $p \le q\le 1$ is a SNE, it might appear that some transmitters can use $q=1$, and 
create more interference for other transmitters. Each transmitter, however, gets to transmit only with probability 
$rq = p$ for $p \le q\le 1$, since if $q$ is large, transmitter uses up more energy and there is higher chance of energy queue to be in state $0$.
Next, we compute the price of anarchy that compares the performance loss incurred due to selfishness by each transmitter. 

\begin{defn} The price of anarchy (PoA) of a game is the ratio of the utility at the globally optimal solution to the utility at the worst equilibrium.
\end{defn}

\begin{lemma}\label{lem:PoA} The PoA of the throughput game is $1$ if $p < \frac{\lambda_{max}}{\lambda}$, and 
$\frac{\lambda_{max}}{e  p \lambda \exp\left(\frac{-p\lambda}{\lambda_{max}}\right)}$, otherwise.
\end{lemma}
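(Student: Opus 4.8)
The plan is to assemble the price of anarchy from two ingredients that are already available, and then simply take their ratio in each of the two parameter regimes. Since the PoA is the ratio of the socially optimal transmission capacity to the transmission capacity at the worst equilibrium, the denominator is the first thing I would pin down. By Theorem \ref{thm:SNE}, every $q^{*}\in[p,1]$ is a SNE and \emph{all} of them deliver the same per-transmitter throughput $\TH = p\exp\left(-\frac{p\lambda}{\lambda_{max}}\right)$, hence the same transmission capacity $C_{\mathrm{SNE}} = \lambda p\exp\left(-\frac{p\lambda}{\lambda_{max}}\right)$. The key observation here is that because the equilibrium set is ``flat'' in this sense, the worst equilibrium is unambiguous and its transmission capacity is exactly $C_{\mathrm{SNE}}$, independently of the regime.

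Next I would read the numerator, the globally optimal transmission capacity, directly off Theorem \ref{thm:gloptp}, and this is where the two cases of the lemma arise. If $p<\frac{\lambda_{max}}{\lambda}$, then Case~2 of Theorem \ref{thm:gloptp} gives the social optimum as $C_{\mathrm{opt}} = \lambda p\exp\left(-\frac{p\lambda}{\lambda_{max}}\right)$, which is \emph{identical} to $C_{\mathrm{SNE}}$; the selfish and social optima coincide and the ratio is $\mathrm{PoA}=1$. If instead $p\ge\frac{\lambda_{max}}{\lambda}$ (note that $p\le 1$ forces $\frac{\lambda_{max}}{\lambda}\le 1$, so Case~1 of Theorem \ref{thm:gloptp} indeed applies), the social optimum is $C_{\mathrm{opt}}=\frac{\lambda_{max}}{e}$, attained at $q^{\star}=\frac{\lambda_{max}}{\lambda}$.

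In this second regime I would then form the ratio directly,
$$\mathrm{PoA} = \frac{C_{\mathrm{opt}}}{C_{\mathrm{SNE}}} = \frac{\lambda_{max}/e}{\lambda p \exp\left(-\frac{p\lambda}{\lambda_{max}}\right)} = \frac{\lambda_{max}}{e\,p\,\lambda \exp\left(-\frac{p\lambda}{\lambda_{max}}\right)},$$
which is exactly the claimed expression. (Any common rate factor $R$ appearing in the two capacities cancels, so it is harmless to use the normalized forms already stated in Theorems \ref{thm:gloptp} and \ref{thm:SNE}.)

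The argument is essentially bookkeeping rather than analysis: there is no genuine analytic obstacle, since both $C_{\mathrm{opt}}$ and $C_{\mathrm{SNE}}$ are in hand. The one point that demands care is lining up the parameter regimes of the two theorems correctly and recognizing that the threshold $p=\frac{\lambda_{max}}{\lambda}$ which splits the social-optimum formula in Theorem \ref{thm:gloptp} is precisely the threshold that determines whether the selfish and social optima coincide; verifying that $C_{\mathrm{SNE}}$ carries over unchanged into the $p\ge\frac{\lambda_{max}}{\lambda}$ regime (it does, as Theorem \ref{thm:SNE} holds for all $p\le 1$) is what makes the two-case formula fall out cleanly.
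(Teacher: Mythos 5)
Your proposal is correct and follows essentially the same route as the paper: identify the SNE capacity $\lambda p\exp\left(-\frac{p\lambda}{\lambda_{max}}\right)$ from Theorem \ref{thm:SNE}, read the globally optimal capacity off the two cases of Theorem \ref{thm:gloptp}, and take the ratio, with the two optima coinciding when $p<\frac{\lambda_{max}}{\lambda}$. Your write-up is if anything slightly more explicit than the paper's (which leaves the final ratio computation implicit), but there is no substantive difference in approach.
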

\begin{proof} If $p < \frac{\lambda_{max}}{\lambda}$, the globally optimal transmission probability $q^{\star}$ is that such that $p \le q^{\star} \le 1$ (Theorem \ref{thm:gloptp}), which is identical to the selfishly optimal transmission probability $q^*$ achieving the 
SNE (Theorem \ref{thm:SNE}). Hence PoA is $1$ when $p < \frac{\lambda_{max}}{\lambda}$. 
For $p > \frac{\lambda_{max}}{\lambda}$, the globally optimal transmission probability is $q^{\star} = \frac{\lambda_{max}}{\lambda}$, whereas the selfishly optimal policy is  $p \le q^{*} \le 1$. The required expression is obtained by taking the ratio of 
globally optimal transmission capacity expression and transmission capacity expression at any SNE.
\end{proof}

\begin{rem} Recently, selfishly optimal transmission probability for ALOHA MAP has been derived  in \cite{Hanawal2012} for throughput maximization ($\TH(q,{\tilde q})$ defined above) with PPP distributed transmitters, 
where each transmitter is powered by a conventional energy source. As expected, without any energy constraints, $q=1$ (always transmit strategy) is selfishly optimal, however, it provides a very poor PoA performance. 
To improve the selfish behavior towards globally optimal solution, a modified objective function 
is considered $\TH(q,{\tilde q}) - \rho q$ in  \cite{Hanawal2012} that linearly penalizes the increase in transmission probability $q$, where  $\rho$ is a constant. It has been shown that by carefully choosing the scaling parameter $\rho$, the PoA can be significantly improved. 

In comparison to \cite{Hanawal2012}, for the setting in this paper, where each transmitter harvests energy from nature and stores it in a battery, the objective function $\TH(q,{\tilde q})$ is already energy aware, and no extra energy dependent factors need to be introduced for obtaining good PoA performance. 
%In our model, there is no incentive in transmitting with high probability, since then there is more chance of  not having enough energy to transmit. 
More importantly, we notice that PoA obtained with the improved energy penalty strategy \cite{Hanawal2012} is worse than the PoA obtained with our model of energy queue dependent transmission probability. Thus, even for conventioanal energy powered sources, using a virtual energy queue based transmission probability can improve the PoA performance.
\end{rem}

\subsection{Finite $B$}
\begin{thm}\label{thm:SNEfinite} With finite battery capacity $B$, $q^{*}=1$ is a SNE. At any SNE, each transmitter gets the same throughput $\TH= p \exp\left(-\frac{p\lambda}{\lambda_{max}}\right)$ and transmission capacity is $C =\lambda p \exp\left(-\frac{p\lambda}{\lambda_{max}}\right)$.
\end{thm}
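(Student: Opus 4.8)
The plan is to reuse the best-response structure already established in the proof of Theorem \ref{thm:SNE}, now combined with the monotonicity of $f_B$ from Lemma \ref{lem:inc}. First I would fix the common strategy $\tilde q$ of all the other transmitters and write the $n$th transmitter's utility exactly as in the infinite-battery computation, namely $\TH_n(q,\tilde q) = r q \exp\left(-\frac{\tilde r \tilde q \lambda}{\lambda_{max}}\right)$, where now $r = r_B$ is the finite-$B$ occupancy probability $P(E_n(t)\ge1)$ and depends on the $n$th transmitter's own choice $q$. The key observation is that the exponential factor depends only on the interferers' parameters $\tilde r,\tilde q$ and is thus a positive constant with respect to $q$; hence maximizing $\TH_n(q,\tilde q)$ over $q\in[0,1]$ is equivalent to maximizing $r q = f_B(q)$.

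Next I would invoke Lemma \ref{lem:inc}, which guarantees that $f_B(q)=q\,r_B$ is an increasing function of $q$ on $[0,1]$. Consequently its maximizer over $[0,1]$ is $q=1$, and this holds irrespective of the value of $\tilde q$. Thus $q=1$ is a best response to every opponent profile, i.e. a dominant strategy, so the symmetric profile in which every transmitter uses $q^{*}=1$ is a SNE; moreover, since $q=1$ is a best response regardless of the opponents (and the unique one under the strict monotonicity witnessed already for $B=1$), every SNE must place each transmitter at $q^{*}=1$, which justifies the claim ``at any SNE.''

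Finally I would evaluate the throughput at this equilibrium, which reduces to computing $r_B$ at $q=1$. The hard (though purely mechanical) part is checking that $r_B(1)=p$ for \emph{every} finite $B$: substituting $q=1$ into the finite-$B$ formula for $r_B$ makes the factor $\left(\frac{p(1-q)}{q(1-p)}\right)^B$ vanish, because $1-q=0$, leaving $r_B(1)=\left.\frac{p}{q}\right|_{q=1}=p$; the $B=1$ expression $\frac{p}{p+q-pq}$ likewise evaluates to $p$ at $q=1$. Alternatively, one may read this directly off the birth-death chain of Fig. \ref{fig:maxBD} with $q=1$, whose stationary distribution concentrates on states $0$ and $1$ and gives $P(E_m(t)\ge1)=p$. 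Substituting $r=\tilde r=r_B(1)=p$ and $q=\tilde q=1$ into $\TH_n$ then yields $\TH = p\exp\left(-\frac{p\lambda}{\lambda_{max}}\right)$, and multiplying by the node density gives $C=\lambda p\exp\left(-\frac{p\lambda}{\lambda_{max}}\right)$, completing the argument.
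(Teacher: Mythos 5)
Your proposal is correct and follows essentially the same route as the paper: reduce the best-response problem to maximizing $f_B(q)=q\,r_B$ via the observation that the interference term depends only on $\tilde q$, invoke Lemma \ref{lem:inc} to conclude $q^{*}=1$, and evaluate $f_B(1)=p$ to get the throughput. Your explicit verification that $r_B(1)=p$ for every finite $B$ and the dominant-strategy/uniqueness remark are just welcome elaborations of steps the paper leaves implicit.
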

\begin{proof} Similar to the $B =\infty$ case, we have that  
$\TH_n(q,{\tilde q}) = r q \exp\left(- {\tilde r} {\tilde q} \exp\left(\frac{\lambda}{\lambda_{max}}\right) \right)$,
and  we need to find the optimal $q$ that maximizes $r q$, i.e. find $q$ that maximizes $rq = f_B(q)$. 
The function $f_B(q)$ is an increasing function of $q$ (Lemma \ref{lem:inc}), and achieves its optimal value equal to $p$ at $q=1$. Thus $q^{*} =1$ is a SNE with throughput of each transmitter $\TH =p \exp\left(- \frac{p\lambda}{\lambda_{max}}\right)$.
\end{proof}

Next, we compute the PoA for the finite  battery capacity case. 

\begin{lemma}\label{lem:PoA} The PoA of the throughput game with finite battery capacity  is $1$ if $y^*>1$, where $y^*$ is the solution to 
$f_{B}(y) = \frac{\lambda_{max}}{\lambda}$, and 
$\frac{\lambda q^{\star} \exp\left(- \frac{ q^{\star} \lambda}{\lambda_{max}}\right)}{  p \lambda \exp\left(- \frac{p\lambda}{\lambda_{max}}\right)}$ where $f_{B}(q^{\star}) = \frac{\lambda_{max}}{\lambda}$, otherwise.
\end{lemma}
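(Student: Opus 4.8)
The plan is to read the PoA straight off its definition---the ratio of the globally optimal transmission capacity to the transmission capacity at the worst equilibrium---by combining the two finite-$B$ theorems already proved. First I would fix the denominator. By Theorem \ref{thm:SNEfinite} the point $q^* = 1$ is a SNE with transmission capacity $C_{SNE} = \lambda p \exp\left(-\frac{p\lambda}{\lambda_{max}}\right)$. Because Lemma \ref{lem:inc} shows $f_B(q) = q r_B$ is strictly increasing on $[0,1]$, the quantity $rq$ that each selfish transmitter wants to maximize is increasing in its own access probability, so the unilateral best response is always $q=1$ regardless of the opponents' play. Hence $q^*=1$ is the \emph{unique} SNE, and the ``worst equilibrium'' in the PoA definition is unambiguously this single point; the denominator is therefore $C_{SNE}$.

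Next I would treat the numerator via Theorem \ref{thm:gloptparbB}, which gives the globally optimal access probability $q^\star = \min\{\hat q,1\}$, where $\hat q$ solves $f_B(\hat q) = \frac{\lambda_{max}}{\lambda}$; this $\hat q$ is the $y^*$ of the statement. The proof then splits on whether $y^*$ exceeds $1$. If $y^* > 1$, then $q^\star = 1$ coincides with the SNE, so the globally optimal transmission capacity equals $C_{SNE}$ and the ratio is exactly $1$, establishing the first branch. If instead $y^* \le 1$, then $q^\star = \hat q$ is interior, and the globally optimal transmission capacity is obtained by evaluating the finite-$B$ expression $C = \lambda f_B(q)\exp\left(-\frac{\lambda f_B(q)}{\lambda_{max}}\right)$ at $q=q^\star$. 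Dividing this by $C_{SNE}$ and simplifying using $f_B(q^\star) = \frac{\lambda_{max}}{\lambda}$ produces the claimed quotient, completing the second branch.

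The computation is otherwise routine, so the one point that genuinely needs care is the uniqueness of the SNE: without it, ``worst equilibrium'' would not be well defined, and the denominator could not simply be taken as $C_{SNE}$. This uniqueness is precisely what the strict monotonicity of $f_B$ from Lemma \ref{lem:inc} supplies, so that step carries the real weight. A minor consistency check I would also perform is that the two branches agree at the threshold $y^* = 1$: there $\hat q = 1$ forces $f_B(1) = \frac{\lambda_{max}}{\lambda}$, i.e. $p = \frac{\lambda_{max}}{\lambda}$, at which point the interior-case ratio collapses to $1$ and matches the first branch, confirming continuity of the PoA across the threshold.
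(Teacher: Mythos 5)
Your proof is correct and follows essentially the same route as the paper's: both branches come from comparing the globally optimal capacity of Theorem \ref{thm:gloptparbB} against the SNE capacity $\lambda p \exp(-p\lambda/\lambda_{max})$ of Theorem \ref{thm:SNEfinite}, splitting on whether $y^*$ exceeds $1$. The one divergence is that you write the optimal capacity as $\lambda f_B(q^\star)\exp\left(-\lambda f_B(q^\star)/\lambda_{max}\right) = \lambda_{max}/e$, which is the correct expression since the capacity depends on $f_B(q)=q r_B$ rather than on $q$ alone, whereas the lemma statement and the paper's proof write $\lambda q^\star \exp\left(-q^\star\lambda/\lambda_{max}\right)$ (an apparent slip in the paper); your added uniqueness argument for the SNE via the strict monotonicity of $f_B$ is a sensible tightening that the paper leaves implicit.
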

\begin{proof} If $y^*$ satisfies $f_{B}(y)= \frac{\lambda_{max}}{\lambda}$ and $y>1$, then the globally optimal transmission probability  $q^{\star}=1$ (Theorem \ref{thm:gloptparbB})  is equal to the selfishly optimal transmission probability $q^{*}=1$ (Theorem \ref{thm:SNEfinite}).
Otherwise, the globally optimal transmission capacity is $\lambda q^{\star} \exp\left(- \frac{ q^{\star} \lambda}{\lambda_{max}}\right),$  where $f_{B}(q^{\star}) = \frac{\lambda_{max}}{\lambda}$ (Theorem \ref{thm:gloptparbB}), while the transmission capacity at SNE is $\lambda p \exp\left(- \frac{p\lambda}{\lambda_{max}}\right)$ (Theorem \ref{thm:SNEfinite}).
\end{proof}

{\it Discussion:} In this section, we considered the game theoretic setting for ALOHA MAP where each transmitter unilaterally tries to maximize its own throughput. With each transmitter harvesting energy from nature at a finite rate, we showed that the selfishly optimal and globally optimal strategies are not very different, and the PoA is quite small. With conventional energy powered transmitters, the selfish strategy is to always transmit, however, in the energy harvesting setting there is no incentive for any transmitter to transmit aggressively, since more transmission attempts deplete the energy available for future transmission. 

In the previous two sections we analyzed the transmission capacity of an ad hoc network with ALOHA MAP. Another widely used MAP is CSMA, and in the next section we analyze the performance of CSMA in an ad hoc network with energy harvesting nodes.
\section{CSMA}
In this section, we consider the CSMA MAP and consider a slightly different network model compared to Section \ref{sec:sys}, that has been introduced in \cite{Kaynia2008,Kaynia2011}. In Section \ref{sec:sys}, we assumed that transmitter locations are distributed as a 2-D PPP, and each transmitter always had a packet to transmit and uses ALOHA MAP for packet transmissions. With CSMA, analyzing such model is rather challenging. In this section, we consider an area $A$, and model the packet arrival process as a one-dimensional PPP with arrival rate $(A/L)\lambda$, where $L$ is the fixed packet duration.
Each packet after arrival is assigned to a transmitter location that is uniformly
distributed in area $A$, and the receiver corresponding to a particular transmitter is located at a fixed
distance $d$ away with a random orientation, as shown in Fig. \ref{fig:csma}. For $A \rightarrow \infty$, with ALOHA MAP, this process corresponds to a 2-D PPP of transmitter locations with density $\lambda$ (Section \ref{sec:sys}), where each transmitter has packet 
arrival rate of $\frac{1}{L}$.

The SIR between transmitter $T_n$ and its receiver 
$R_n$ at time $t$ is  $\SIR_n(t) = \frac{ 
d^{-\alpha} |h_{nn}|^2
}
{
\sum_{
{\cal T}_s \in \Phi \backslash \{ {\cal T}_n\}
}
{\mathbf 1}_{T_s}(t) 
d_{ns}^{-\alpha}|h_{ns}|^2}$, 
similar to Section \ref{sec:sys}, where $\Phi$ is the set of all transmitters, and ${\mathbf 1}_{T_s}(t) =1$, if the transmitter $T_s$ is not in back-off and has energy to transmit, and $0$ otherwise. 
With CSMA MAP, transmitter $T_n$ sends its packet at time $t$ if the channel is sensed {\it idle} at time $t$, which in our case corresponds to $\SIR_n(t) > \theta$, with unit power  if available energy $E_n(t) \ge 1$. Otherwise, the transmitter backs off and makes a retransmission attempt after a random amount of time. If $T_n$ transmits the packet, the packet transmission can still fail if $\SIR_n$ falls below $\theta$ for the duration of packet transmission $L$. Thus, the outage probability $P_{out} =  P_b + (1-P_b) P_{\text{fail} | \text{no backoff}}$, where $P_b$ is the back off probability, and $P_{\text{fail} | \text{no backoff}}$ is the probability that the transmission fails. Hence, the transmission capacity with CSMA MAP is defined as 
$C = \lambda  (1-P_{out}) R$ bits/sec/Hz/m$^2$.

Similar to Section \ref{sec:sys}, we assume that the energy 
arrival process is i.i.d. Bernoulli with rate $p$ across different transmitters. In this section, we only consider the  $B=\infty$ case. Analysis for finite $B$ follows similarly. The transition probability diagram for energy queue with CSMA is identical to  Fig. \ref{fig:BD} with $q$ replaced by $1-P_b$, and $r = P(E_n(t) \ge 1)  = \min\left\{\frac{p}{1-P_b},1\right\}$.

\begin{rem}
CSMA MAP introduces correlation among different transmitter's back-off events, and hence the number of simultaneously active transmitters on the 2-D plane no longer follows a PPP. Nevertheless, for analytical tractability, as an approximation we assume that the transmitter back-off events are independent, and simultaneously active transmitter locations are still PPP distributed. The simulation results show that this assumption is reasonable \cite{Kaynia2008,Kaynia2011}.
\end{rem}
In the next Theorem we derive the back-off probability for any transmitter with the CSMA MAP.

\begin{thm}\label{thm:backoff} The backoff probability $P_b = 1- \exp\left(-\frac{\lambda  }{\lambda_{max}} \right)$ if $ \frac{- \lambda_{max} \ln p}{\lambda} >p$, otherwise $P_b$ satisfies  $P_b = 1-\exp\left(-\frac{\lambda (1-P_b)}{\lambda_{max}} \right)$ which can be solved using Lambert's function $W_0(.)$.
\end{thm}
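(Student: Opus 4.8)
The plan is to close a self-consistency (mean-field fixed-point) equation for $P_b$. First I would observe that a transmitter possessing energy backs off exactly when it senses the channel busy, i.e.\ when $\SIR_n(t)\le\theta$, so that $P_b=P(\SIR_n(t)\le\theta)=1-P_{suc}$, and the only remaining task is to evaluate $P_{suc}$ for the CSMA active-transmitter field. Invoking the independence approximation of the preceding Remark, the simultaneously transmitting nodes form a homogeneous PPP whose density is the base density thinned by two factors: the probability $r$ that a node has energy and the probability $1-P_b$ that it does not back off, giving $\lambda_a=\lambda\,r\,(1-P_b)$. Applying the Rayleigh-fading coverage formula already used in Section~\ref{sec:sys} (from \cite{Baccelli06}), namely $P_{suc}=\exp(-\lambda_a/\lambda_{max})$, yields
\[
P_b = 1 - \exp\!\left(-\frac{\lambda\, r\,(1-P_b)}{\lambda_{max}}\right),
\]
into which I would substitute $r=\min\{p/(1-P_b),\,1\}$.

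The remaining work is to resolve the two regimes created by the $\min$. In the unsaturated regime $1-P_b>p$ (so $r=p/(1-P_b)$), the product $r(1-P_b)$ collapses to the constant $p$, the right-hand side no longer depends on $P_b$, and one reads off the closed form $P_b=1-\exp(-\lambda p/\lambda_{max})$. I would then verify that this value is consistent with its defining inequality $P_b<1-p$: the requirement $\exp(-\lambda p/\lambda_{max})>p$ rearranges to $\frac{-\lambda_{max}\ln p}{\lambda}>p$, which is exactly the threshold stated in the theorem. In the complementary (saturated) regime $r=1$, the equation becomes $P_b=1-\exp(-\lambda(1-P_b)/\lambda_{max})$; setting $u=1-P_b$ turns it into $u\,e^{\lambda u/\lambda_{max}}=1$, whence $\lambda u/\lambda_{max}=W_0(\lambda/\lambda_{max})$ and $P_b$ is recovered through Lambert's function, as claimed.

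For existence and uniqueness I would argue by monotonicity. In the saturated branch the map $P_b\mapsto 1-\exp(-\lambda(1-P_b)/\lambda_{max})$ is strictly decreasing in $P_b$, so it meets the identity line in a single point and the fixed point is unique; in the unsaturated branch the right-hand side is constant, so uniqueness is immediate. Comparing $\frac{-\lambda_{max}\ln p}{\lambda}$ against $p$ then cleanly selects which branch is operative, so exactly one of the two expressions applies. I expect the main obstacle to be justifying the self-consistency equation itself---in particular the circular dependence in which $\lambda_a$ depends on $P_b$ both through the thinning factor $1-P_b$ and through the energy-availability factor $r(P_b)$---and confirming that this coupled fixed point is well defined and single-valued; once that is in place, the algebra of the two regimes and the Lambert-$W$ inversion are routine.
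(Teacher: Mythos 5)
Your proposal is correct and follows essentially the same route as the paper: set up the fixed-point equation $P_b = 1-\exp\left(-\lambda(1-P_b)r/\lambda_{max}\right)$ for the thinned PPP of active transmitters, substitute $r=\min\{p/(1-P_b),1\}$, and use the threshold $\frac{-\lambda_{max}\ln p}{\lambda}$ versus $p$ to decide which branch of the $\min$ is self-consistent (the paper does this by contradiction, you by direct verification, which is equivalent). Note that in the unsaturated regime your closed form $P_b = 1-\exp\left(-\lambda p/\lambda_{max}\right)$ is the correct one --- the factor $p$ is missing from the theorem statement and proof as printed, evidently a typo, since the paper's own simulation section uses $1-\exp\left(-p\lambda/\lambda_{max}\right)$; your uniqueness-by-monotonicity remark is a small bonus the paper omits.
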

\begin{proof} Transmitter $T_n$ goes into backoff at time $0$ if 
$\SIR_n$ at time $0$ is less than $\theta$. Note that the set of transmitters active at time $0$ are those that started transmitting between $-L$ to $0$ since the packet length is $L$. The transmitters that become active at any time $t$ between time $-L$ and $0$ is a PPP with density $\frac{\lambda}{L} (1-P_b) r$.
Assuming independent back off events across different transmitters, the active set of transmitters at any time slot between $-L$ and $0$ are independent, and since the union of independent PPPs is also a PPP with sum of the densities, transmitters that are active at time $0$ is a PPP with density $\sum_{i=-L}^{0}\frac{\lambda}{L} (1-P_b) r = \lambda (1-P_b) r$. For large $A$, this translates to having PPP distributed active transmitter locations on the 2-D plane with density $\lambda (1-P_b) r$. 
Thus, $P_b = P(\SIR_n(0) < \theta) = 1-\exp\left(-\frac{\lambda (1-P_b) r}{\lambda_{max}} \right)$ \cite{Baccelli2006}.

Next, we proceed using contradiction. 
Let $P_b > 1-p$. Then $r=1$, and hence $P_b = 1- \exp\left(-\frac{\lambda (1-P_b) }{\lambda_{max}} \right) > 1-p$, which results in $1-P_b \ge \frac{-\lambda_{max} \ln p}{\lambda}$. However, $p, \lambda$, and $\lambda_{max}$ are fixed parameters and if they satisfy the relation $\frac{- \lambda_{max} \ln p}{\lambda} >p$, it implies that $P_b \le 1-p$. Thus, we get a contradiction, 
since we started with $P_b > 1-p$. Hence if $\frac{- \lambda_{max} \ln p}{\lambda} >p$, $P_b \le 1-p$, and correspondingly $r = \frac{p}{1-P_b}$ and 
$P_b = 1- \exp\left(-\frac{\lambda  }{\lambda_{max}} \right)$. The other case is obvious.\end{proof}
Next, we derive an explicit expression for packet failure probability with the CSMA MAP.
\begin{thm}\label{thm:outage}  $P_{\text{fail} | \text{no backoff}} = 1- \frac{\sum_{\ell=0}^{L+1}(-1)^{\ell} \binom{L+1}{\ell} \exp^{-\frac{\lambda}{T} \left( \int_{\bbR^2} 1-  \left(\frac{(1-P_b)r}{1+d^{\alpha}\theta x^{-\alpha}} + 1- (1-P_b)r\right)^{\ell} dx\right)}}{1-P_b}$. For $L=1$, $P_{\text{fail} | \text{no backoff}} = 1-(1-P_b)\exp\left(2 \lambda \theta^{2/\alpha} d^2 (1-P_b)^2r^2 \pi^2 \frac{\alpha-2}{\alpha} csc\left(\frac{2\pi}{\alpha}\right)  \right)$.
\end{thm}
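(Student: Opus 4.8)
The plan is to turn the conditional failure probability into a ratio: the denominator is the single-slot idle (no-backoff) probability and the numerator is the probability that the tagged link clears the threshold in \emph{every} slot of its packet. First I would write $P_{\text{fail}\mid\text{no backoff}} = 1 - P_{\text{succ}\mid\text{no backoff}}$. The event ``no backoff'' is precisely $\{\SIR_n(0) > \theta\}$ (the channel is sensed idle at the start), whose probability is $1-P_b$ by Theorem \ref{thm:backoff}. A packet of length $L$ succeeds iff $\SIR_n(t) > \theta$ for all $t \in \{0,1,\dots,L\}$, and since the slot-$0$ idle event is one of these, $P_{\text{succ}\mid\text{no backoff}} = P\big( \bigcap_{t=0}^{L} \{\SIR_n(t)>\theta\} \big) / (1-P_b)$. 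This already produces the $1-P_b$ in the denominator of the statement.

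The heart of the argument is the joint success probability $P\big( \bigcap_{t=0}^{L}\{\SIR_n(t)>\theta\}\big)$, which I would evaluate by conditioning on the interferer point process $\Phi$. Given $\Phi$, an interferer at $x$ is active in a given slot independently with probability $(1-P_b)r$ (it is out of back-off and has energy), with an independent unit-mean exponential fading power per slot; hence, conditioned on $\Phi$, the per-slot success events are independent. Using $\SIR_n(t)>\theta \Leftrightarrow |h_{nn}|^2 > \theta d^{\alpha} I_n(t)$ together with the exponential law of $|h_{nn}|^2$, each slot contributes the Laplace-type factor $\bbE[\exp(-\theta d^{\alpha} I_n(t)) \mid \Phi]$, and averaging over a single interferer's activity indicator and fading gives the kernel $Q(x) = \tfrac{(1-P_b)r}{1+d^{\alpha}\theta |x|^{-\alpha}} + (1-(1-P_b)r)$. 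Organizing the $L+1$ slots by inclusion--exclusion over the slot-outage events and applying the probability generating functional of the homogeneous PPP to $\bbE_\Phi[\prod_{x\in\Phi} Q(x)^{\ell}]$ (with the density $\lambda/T$ dictated by the packet-arrival model of this section) turns each cardinality-$\ell$ contribution into $\exp(-\tfrac{\lambda}{T}\int_{\bbR^2}[1-Q(x)^{\ell}]\,dx)$. Assembling these with the weights $(-1)^{\ell}\binom{L+1}{\ell}$ yields the displayed numerator.

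For the special case $L=1$ I would then carry out the integral explicitly. Only $\ell \in \{0,1,2\}$ appear: the $\ell=0$ term is $1$, the $\ell=1$ term reproduces the single-slot factor tied to $1-P_b$, and the $\ell=2$ term reduces to $\int_{\bbR^2}[1-Q(x)^2]\,dx$. Factoring $1-Q(x)^2 = (1-Q(x))(1+Q(x))$, passing to polar coordinates, and substituting $u = |x|^{\alpha}$ turns the radial integral into a Beta integral; its evaluation produces the cosecant factor $\csc(2\pi/\alpha)$ and the rational factor $(\alpha-2)/\alpha$. Collecting constants and using $(1-P_b)^2 r^2 = ((1-P_b)r)^2$ gives the stated closed form for $L=1$.

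I expect the main obstacle to be the generating-functional step, where the temporal correlation of the interference (the \emph{same} points of $\Phi$ appear in every slot) must be handled correctly: conditioning on $\Phi$ decouples the slots, and the shared locations are absorbed by raising the single-slot kernel $Q(x)$ to the power equal to the number of slots in the chosen subset. The secondary, purely computational, difficulty is the closed-form reduction of $\int_{\bbR^2}[1-Q(x)^2]\,dx$ to the cosecant expression, a Beta / Gauss-hypergeometric evaluation that must be done with care in the exponent bookkeeping.
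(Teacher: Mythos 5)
Your overall route is the same as the paper's: write $P_{\text{fail}\mid\text{no backoff}} = 1 - P\bigl(\bigcap_{t=0}^{L}\{\SIR_n(t)>\theta\}\bigr)/P(\SIR_n(0)>\theta)$ with $P(\SIR_n(0)>\theta)=1-P_b$, and evaluate the joint multi-slot probability by conditioning on the interferer locations, using conditional independence across slots, and applying the probability generating functional of the PPP. The paper simply cites \cite{VazeTDR2011} for the numerator and \cite{Ganti2009} for the $L=1$ case, whereas you sketch those derivations; your kernel $Q(x)$, the PGFL identity $\bbE_\Phi\bigl[\prod_{x}Q(x)^{\ell}\bigr]=\exp\bigl(-\mu\int_{\bbR^2}(1-Q(x)^{\ell})\,dx\bigr)$, and the polar-coordinate/Beta-integral plan for the $L=1$ constants are all the right ingredients.

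The gap is in the inclusion--exclusion step. Under the model you yourself adopt (a single static $\Phi$ whose points appear in every slot, with i.i.d.\ per-slot activity and fading, so slots are conditionally independent given $\Phi$), the joint success probability needs no alternating sum at all: $P\bigl(\bigcap_{t=0}^{L}A_t\bigr)=\bbE_\Phi\bigl[\prod_x Q(x)^{L+1}\bigr]=\exp\bigl(-\mu\int(1-Q^{L+1})\bigr)$, a single term. The alternating sum $\sum_{\ell=0}^{L+1}(-1)^{\ell}\binom{L+1}{\ell}\exp\bigl(-\mu\int(1-Q^{\ell})\bigr)$ is, by inclusion--exclusion applied to $P\bigl(\bigcap_t A_t^c\bigr)=\sum_{S}(-1)^{|S|}P\bigl(\bigcap_{t\in S}A_t\bigr)$, the probability that \emph{all} $L+1$ slots are in outage, not that all succeed (check $L=1$: it equals $1-2P(A_0)+P(A_0\cap A_1)=P(A_0^c\cap A_1^c)$). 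So ``assembling with the weights $(-1)^{\ell}\binom{L+1}{\ell}$'' does not produce the displayed numerator for the event you defined; carried out honestly, your sketch lands on a different expression. Reaching the stated formula would require either identifying the event it actually represents or confronting the feature you explicitly set aside: in this section the interferers are not a static population but arrive as a per-slot PPP of density $\tfrac{\lambda}{L}(1-P_b)r$ and depart after $L$ slots, so the interferer sets at slots $0$ and $L$ are largely disjoint, and it is that birth--death structure (handled in the cited reference) rather than inclusion--exclusion over a fixed $\Phi$ that generates the $\lambda/L$ prefactor and the combinatorial sum.
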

\begin{proof} Note that $P_{\text{fail} | \text{no backoff}}$ is the probability that at any time $t$, $\SIR_n(t)<\theta$ for $0< t \le L$ given that $\SIR_n(0) > \theta$. The transmitters that become active at any time $t$ between time $0$ and $L$ is a PPP with density $\frac{\lambda}{L} (1-P_b) r$.
Then,
%\vspace{-0.5in}
\begin{eqnarray*}
P_{\text{fail} | \text{no backoff} } &=& 1- P(\SIR_n(1)> \theta, \dots, \SIR_n(L) > \theta | \SIR_0 > \theta),\\
&=&1- \frac{P(\SIR_0 > \theta, \SIR_n(1)> \theta, \dots, \SIR_n(L) > \theta)}{P(\SIR_0 > \theta)}, \\
&=& 1- \frac{\sum_{\ell=0}^{L+1}(-1)^{\ell} \binom{L+1}{\ell} \exp^{-\frac{\lambda}{L} \left( \int_{\bbR^2} 1-  \left(\frac{(1-P_b)r}{1+d^{\alpha}\theta x^{-\alpha}} + 1- (1-P_b)r\right)^{\ell} dx\right)}}{1-P_b},
\end{eqnarray*}
where the expression in the numerator follows from \cite{VazeTDR2011}. Moreover, for the special case of $L=1$,  $P(\SIR_1> \theta | \SIR_0 > \theta) = (1-P_b)\exp\left(2 \lambda \theta^{2/\alpha} d^2 (1-P_b)^2r^2 \pi^2 \frac{\alpha-2}{\alpha} csc\left(\frac{2\pi}{\alpha}\right)  \right)$ \cite{Ganti2009}.
%
%Once again, 
%assuming independent back off events across different transmitters, the active set of transmitters at any time slot between $0$ and $L$ are independent, hence using a similar superposition argument as in Theorem \ref{thm:backoff},  Thus, the probability that $\SIR_n < \theta$ between time $0$ and $L$ is $P_b = 1-\exp\left(-\frac{\lambda (1-P_b) r}{\lambda_{max}} \right)$ \cite{Baccelli2006}, where $P_b$ is given by Theorem \ref{thm:backoff}, and $r = \min\left\{\frac{p}{1-P_b},1\right\}$.
\end{proof}

Hence using $P_{out} =  P_b + (1-P_b) P_{\text{fail} | \text{no backoff}}$, we get the 
transmission capacity $C = \lambda  (1-P_{out}) R$ for CSMA MAP by combining Theorem \ref{thm:backoff} and \ref{thm:outage}. Finding the closed form expression for $P_{\text{fail} | \text{no backoff}}$ derived in Theorem \ref{thm:outage} is quite challenging. An upper bound on the $P_{\text{fail} | \text{no backoff}}$, however, can be found  using the FKG inequality \cite{Grimmett1980} as follows. 

\begin{defn}\label{exmdecevent} Let $(\Omega, {\cal F}, {\cal P}$) be the probability space. Let $A \in {\cal F}$, and ${\mathbf 1}_{A}$ be the indicator function of $A$. Event $A \in \ {\cal F}$ is called increasing if ${\mathbf 1}_{A}(\omega) \le {\mathbf 1}_{A}(\omega')$, whenever $\omega \le \omega'$ for some partial ordering on $\omega$. The event $A$ is called decreasing if its complement $A^{c}$ is increasing.
\end{defn}
 \begin{lemma}\label{lemfkg}(FKG Inequality ) If both $A, B \in {\cal F}$ are increasing or decreasing events then $P(AB) \ge P(A)P(B)$ \cite{Grimmett1980}.
\end{lemma}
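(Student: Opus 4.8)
The plan is to reduce the statement about events to the stronger functional inequality $\bbE[fg] \ge \bbE[f]\,\bbE[g]$ for any two increasing (or any two decreasing) functions $f,g$ on the underlying configuration space, and then recover the lemma by specializing to $f = {\mathbf 1}_A$ and $g = {\mathbf 1}_B$. If $A$ is increasing then ${\mathbf 1}_A$ is a non-decreasing function of $\omega$, while if $A$ is decreasing then $-{\mathbf 1}_A$ is non-decreasing; in either admissible case $f$ and $g$ share the same direction of monotonicity, which is exactly what the functional inequality requires. The whole problem is therefore equivalent to showing that two functions which increase (or decrease) together are non-negatively correlated.

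First I would establish the one-coordinate base case. If $\mu$ is any probability measure and $f,g$ are both non-decreasing, then for every pair $(x,y)$ the product $(f(x)-f(y))(g(x)-g(y))$ is non-negative, since its two factors always carry the same sign. Integrating this against the product measure $\mu \otimes \mu$ and expanding gives $2\big(\bbE[fg]-\bbE[f]\,\bbE[g]\big)\ge 0$, which is the claim for a single coordinate; the same pairing works verbatim when both functions are non-increasing.

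Next I would lift this to a product space $\Omega=\prod_{i}\Omega_i$ equipped with the coordinate-wise partial order, by induction on the number of coordinates. Conditioning on the last coordinate $\omega_n$, the conditional expectations $F(\omega_n)\bydef \bbE[f\,|\,\omega_n]$ and $G(\omega_n)\bydef \bbE[g\,|\,\omega_n]$ are themselves non-decreasing in $\omega_n$, because averaging over the first $n-1$ independent coordinates preserves monotonicity in the remaining one. The induction hypothesis applied slice-by-slice gives $\bbE[fg\,|\,\omega_n]\ge F(\omega_n)G(\omega_n)$, and the one-coordinate base case applied to $F,G$ gives $\bbE[FG]\ge \bbE[F]\,\bbE[G]=\bbE[f]\,\bbE[g]$. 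Chaining these and taking the outer expectation over $\omega_n$ yields $\bbE[fg]\ge\bbE[f]\,\bbE[g]$, completing the induction.

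Finally I would transfer the result to the Poisson setting needed in Theorem \ref{thm:outage}. Restricting the point process to a bounded window with a bounded number of potential points, the configuration is governed by independent coordinates (positions together with independent marks such as the energy and back-off indicators), so the product-measure inequality applies directly; passing to the full plane then follows by a monotone limit, since the events of interest are increasing limits of their windowed restrictions. I expect the main obstacle to be the induction step rather than the base case: one must verify carefully that the conditioning decomposition genuinely preserves the coordinate-wise monotonicity of $F$ and $G$, and that the independence (product) structure of the measure is really being used, for without it the bare partial order is insufficient and one would instead need the full FKG lattice condition on $\cP$.
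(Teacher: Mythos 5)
The paper does not prove this lemma at all: it is quoted as a known result and attributed to \cite{Grimmett1980}, so there is no ``paper proof'' to match yours against. Your proposal is a genuine, essentially correct proof of the relevant special case, namely Harris's inequality for product measures: the one-coordinate base case via the nonnegativity of $(f(x)-f(y))(g(x)-g(y))$ integrated against $\mu\otimes\mu$, the induction on coordinates through the conditional expectations $F,G$ (whose monotonicity does indeed rely on independence, as you correctly flag), and the reduction of decreasing events to increasing ones by negation are all the standard and correct ingredients. This is strictly weaker than the full FKG inequality (which needs the lattice condition on $\cP$), but it is exactly what the application in Theorem \ref{thm:outage} requires, since there the underlying randomness is a marked Poisson process with independent components. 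The one step that needs tightening is the transfer to the Poisson setting: a Poisson process restricted to a bounded window does \emph{not} have a bounded number of potential points, so you cannot literally invoke the finite-product result on the window; the standard fix is either to condition on the number of points in the window (a binomial process with i.i.d.\ locations, hence a product space) or to discretize the window into $n$ independent cells and pass to the limit $n\to\infty$, after which your monotone-limit argument to the full plane goes through. With that detail repaired your argument is a complete, self-contained proof of the lemma as used in the paper, which is more than the paper itself offers.
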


\begin{lemma} The outage probability of CSMA MAP  $P_{out} \le  1- (1-P_b)^{L+1}$.
\end{lemma}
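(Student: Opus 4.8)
The plan is to recognize that the overall success probability $1-P_{out}$ is exactly the joint probability that the SIR of the typical link stays above threshold in every one of the $L+1$ relevant slots, and then to lower-bound that joint probability by the product of its marginals using the FKG inequality (Lemma~\ref{lemfkg}). First I would rewrite the outage expression. Since $P_{out} = P_b + (1-P_b)P_{\text{fail} | \text{no backoff}}$ with $1-P_b = P(\SIR_n(0) > \theta)$ (Theorem~\ref{thm:backoff}) and $1-P_{\text{fail} | \text{no backoff}} = P(\SIR_n(1) > \theta,\dots,\SIR_n(L) > \theta \mid \SIR_n(0) > \theta)$ (Theorem~\ref{thm:outage}), multiplying these out telescopes the conditioning and gives
\[
1-P_{out} = P\Big(\bigcap_{t=0}^{L}\{\SIR_n(t) > \theta\}\Big).
\]
Thus it suffices to show this joint probability is at least $(1-P_b)^{L+1}$.

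Next I would set up the partial ordering required by Definition~\ref{exmdecevent}. Take the configuration $\omega$ to consist of the interferer point process together with the activity indicators and fading marks that determine the interference seen by $R_n$ across the slots $t=0,\dots,L$. Ordering configurations so that adding an interferer, activating a transmitter, or increasing an interfering fading gain makes $\omega$ larger, each event $\{\SIR_n(t) > \theta\}$ becomes a \emph{decreasing} event in the sense of Definition~\ref{exmdecevent}: the denominator of $\SIR_n(t)$ is nondecreasing in the interference while the signal numerator $d^{-\alpha}|h_{nn}|^2$ is unaffected, so raising the interference can only turn a success into a failure. If the direct-link gain $|h_{nn}|^2$ is redrawn across slots rather than held fixed over the packet, I would first condition on the common direct link, apply the argument to the conditioned decreasing events, and then remove the conditioning via Jensen's inequality applied to $x\mapsto x^{L+1}$; under block fading over the packet duration this extra step is unnecessary.

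Then I would invoke Lemma~\ref{lemfkg}. A finite intersection of decreasing events is again a decreasing event, so peeling off one slot at a time and applying the two-event FKG inequality repeatedly yields
\[
P\Big(\bigcap_{t=0}^{L}\{\SIR_n(t) > \theta\}\Big) \ge \prod_{t=0}^{L} P(\SIR_n(t) > \theta).
\]
By stationarity of the active-transmitter point process, each marginal equals $P(\SIR_n(0) > \theta) = 1-P_b$, so the right-hand side is $(1-P_b)^{L+1}$, giving the claimed bound $P_{out} \le 1-(1-P_b)^{L+1}$.

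The main obstacle is the monotonicity bookkeeping rather than any hard estimate: one must commit to a \emph{single} partial order under which every event $\{\SIR_n(t) > \theta\}$ is simultaneously of the same (decreasing) type, which forces care about how the desired-signal term enters relative to the interference, and one must justify extending the two-event statement of Lemma~\ref{lemfkg} to the $L+1$ events by using closure of decreasing events under intersection. Everything else is immediate from stationarity and the definitions already established for $P_b$ and $P_{\text{fail} | \text{no backoff}}$.
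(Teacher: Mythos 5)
Your proof is correct and follows essentially the same route as the paper's: both identify the success events $\{\SIR_n(t) > \theta\}$ as decreasing events in the interferer configuration and apply the FKG inequality to lower-bound the joint success probability over the $L+1$ slots by $(1-P_b)^{L+1}$. You are somewhat more careful than the paper about fixing a single partial order and about extending the two-event FKG statement to $L+1$ events, but the underlying argument is identical.
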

\begin{proof} Since $\SIR_n(t)$ is decreasing function of the number of interferers, by considering $\omega = (a_1, a_2, \ldots, )$ where for $m \in \bbN$, $a_m =1$ if transmitter $T_m$ is active, and $0$ otherwise, it follows that the success event $\{\SIR_n(t) >\theta\}$ is a decreasing event. Hence, from the FKG inequality, $P(\SIR_n(0) > \theta, \SIR_n(1)> \theta, \dots, \SIR_n(L) > \theta) \ge P(\SIR_0 > \theta)^{L+1}$, since $\SIR_n(t)$ is identically distributed for any $t$. Hence, $P_{\text{fail} | \text{no backoff} } \le 1- (1-P_b)^{L}$, and $P_{out} \le 1- (1-P_b)^{L+1}$.
\end{proof}

{\it Discussion:} In this section, we considered the CSMA MAP for an ad hoc network with energy harvesting nodes. We derived expressions for back-off and outage probability for the CSMA MAP, thereby characterizing the transmission capacity. We  showed that 
depending on the rate of energy arrival  $p$, back-off probability can be written in closed form or can be expressed in terms of  Lambert's function. We also derived an exact expression (and a simplified lower bound) for the outage probability, to characterize the transmission capacity with CSMA MAP.

%\begin{thm} The transmission capacity of an ad hoc network with CSMA using energy harvesting is given by $C = \lambda (1-P_b)\left(1+\exp\left(-\frac{\lambda (1-P_b) r }{\lambda_{max}} \right)\right)$.
%\end{thm}
%\begin{proof} The outage probability $P_{out} = P_b + (1-P_b) P_{\text{fail}}$, and the transmission capacity is $C = \lambda (1-P_{out}) R$, where $R$ corresponds to having SINR greater than $\theta$. From , we know $P_b$ and $P_{\text{fail}}$, and hence we get the required expression for the transmission capacity.
%\end{proof}
\vspace{-.1in}
\section{Simulations} In this section, we present some numerical examples to illustrate our theoretical results.
%We consider an ad hoc network  where the number of nodes is Poisson 
%distributed with mean $200$, and each node lies on a two-dimensional disk of radius $r$. 
%To simulate a particular node density $\lambda$,  the disk radius $r$ is adjusted accordingly. The typical receiver is placed at the center of the disk. 
In all simulations we use energy arrival rate $p=.5$, $\alpha =3$, $\theta =2$, and $d=2$, such that $\lambda_{max} = .023$, except for Fig. \ref{fig:TCinfBg}, where $\alpha =3$,  $\theta =1$, and $d=1$, and 
$\lambda_{max} = .2632$ is used.
In Figs. \ref{fig:TCinfBl} and \ref{fig:TCinfBg}, we plot the transmission capacity for ALOHA MAP for 
$B=\infty$ with $\frac{\lambda_{max}}{\lambda} > p$ and $\frac{\lambda_{max}}{\lambda} \le p=0.5$, respectively, from which we can see that for $\frac{\lambda_{max}}{\lambda} \le p$, the optimal transmission probability is $q^{\star}=\frac{\lambda_{max}}{\lambda}$, while in the other case $p\le q^{\star}\le 1$, as derived in Theorem \ref{thm:gloptp}.
In Figs. \ref{fig:TCB1} and \ref{fig:TCB5}, we plot the transmission capacity for ALOHA MAP with  $B=1$, and $B=5$, respectively. From Figs. \ref{fig:TCinfBl}, \ref{fig:TCB1}, and \ref{fig:TCB5}, we can see that as $B$ increases $q^{\star}$ goes from $0.3$ for $B=1$ to $0.23$ for $B= \infty$ for fixed set of parameters. 
Finally, in Fig. \ref{fig:backoff}, we plot the back-off probability with CSMA MAP as a function of $\lambda$. Following Theorem \ref{thm:backoff}, we see that for $\lambda =.01$ and $.035$ for which $\frac{-\ln(p)\lambda_{max}}{\lambda} >p$, back off probability is equal to $1-\exp^{-\left(\frac{p\lambda}{\lambda_{max}}\right)}$, while for $\lambda =.05$ and $.1$, where $\frac{-\ln(p)\lambda_{max}}{\lambda} \le p$, it satisfies the equation $P_b = 1- \exp^{-\left(\frac{r(1-P_b)\lambda}{\lambda_{max}}\right)}$.

\section{Conclusions} In this paper we considered ALOHA and CSMA MAP for an ad hoc network, and derived optimal transmission probability for ALOHA MAP, and back-off and outage probability expressions for CSMA MAP, when each node in the network harvest energy from nature. We characterized the dependence of system throughput on the energy arrival rate, and derived system parameters for optimal performance. In this work, we assumed that each transmitter attempts to transmit with same probability irrespective of the current energy state. With finite battery capacity, it makes more sense to transmit aggressively in high energy states and vice-versa. 
Analyzing energy aware transmission strategies remains an important problem to solve in future with energy harvesting nodes.

\appendices

\section{}\label{app:nondec}
Proof of Lemma \ref{lem:inc}: We first consider the case of $B=1$. For $B=1$, $f_1(q) = \frac{pq}{p+q-pq}$. Hence $f_1'(q) = \frac{p^2}{(p+q-pq)^2}$ and hence $f_1'(q) > 0$ for $q \in [0,1]$. 
For $B>1$, for $p\ne q$, we next show that $f_B'(q)>0$ for $q\in [0,p) \cup (p,1]$. 
Let $c \bydef \left(\frac{p}{1-p}\right)^B$. For $B >1$, $q\ne p$ we compute the first derivative of $f_B(q)$ as $f_B'(q) =\frac{c (1 - q)^{B-1} (pc (1 - q)^{B+1} +
   q^B ( B q + p (-1 - B + q)))}{(q^{B+1}- pc (1 - q)^B)^2} >  0 \ \text{for} \ q  \in [0,p) \cup (p,1]$. 
   In the interest of space we do not provide intermediate steps and only write the final answer.
%\begin{eqnarray*}
%f_B'(q) &=& \frac{\left(1-c \left(\frac{(1-q)}{q}\right)^B\right)}{1-\frac{pc}{q}\left(\frac{1-q}{q}\right)^B} - \frac{\left(1-c \left(\frac{(1-q)}{q}\right)^B\right)}{1-\frac{pc}{q}\left(\frac{1-q}{q}\right)^B} ,\\
%&=& \frac{\left(cB \left(\frac{(1-q)^B}{q^{B+1}}\right) + cB \left(\frac{(1-q)^{B-1}}{q^{B}}\right)\right)}{1-\frac{pc}{q}\left(\frac{1-q}{q}\right)^B} - \frac{\left(1-c \left(\frac{(1-q)}{q}\right)^B\right) \left(pc(B+1) \left(\frac{(1-q)^{B}}{q^{B+2}}\right) + pcB \left(\frac{(1-q)^{B-1}}{q^{B+1}} \right)\right)}{\left(1-\frac{pc}{q}\left(\frac{1-q}{q}\right)^B\right)^2} ,\\
%&=& \frac{\left(cB \left(\frac{(1-q)^B}{q^{B+1}}\right) + cB \left(\frac{(1-q)^{B-1}}{q^{B}}\right)\right) - pBc^2\frac{(1-q)^{2B}}{q^{2B+2}}- pBc^2 \left(\frac{(1-q)^{2B-1}}{q^{2B+1}}\right)}{\left(1-\frac{pc}{q}\left(\frac{1-q}{q}\right)^B\right)^2 }
% \\&&-  \frac{\left(pc(B+1) \left(\frac{(1-q)^{B}}{q^{B+2}}\right) + pcB \left(\frac{(1-q)^{B-1}}{q^{B+1}} \right)\right)  - \left(pc^2(B+1) \left(\frac{(1-q)^{2B}}{q^{2B+2}}\right) + pBc^2 \left(\frac{(1-q)^{2B-1}}{q^{2B+1}} \right)\right)}{\left(1-\frac{pc}{q}\left(\frac{1-q}{q}\right)^B\right)^2} ,\\
% &=&\frac{c (1 - q)^{B-1} (pc (1 - q)^{B+1} +
%   q^B ( B q + p (-1 - B + q)))}{(q^{B+1}- pc (1 - q)^B)^2}, \\
%&\ge & 0 \ \text{for} \ q  \in [0,p) \cup (p,1].
%\end{eqnarray*}
Also note that for $q=p$ with $B>1$, $f_{B}(q) = rq = \frac{qB}{B+1-q}$. It can be checked that $f_{B}(p-\delta) < f_{B}(p) < f_{B}(p+\delta)$, for small $\delta>0$. Hence $f_{B}(q)$ is an increasing function of $q \in [0,1]$.

\bibliographystyle{../../../IEEEtran}
\bibliography{../../../IEEEabrv,../../../Research}

% Generated by IEEEtran.bst, version: 1.12 (2007/01/11)
\begin{thebibliography}{10}
\providecommand{\url}[1]{#1}
\csname url@samestyle\endcsname
\providecommand{\newblock}{\relax}
\providecommand{\bibinfo}[2]{#2}
\providecommand{\BIBentrySTDinterwordspacing}{\spaceskip=0pt\relax}
\providecommand{\BIBentryALTinterwordstretchfactor}{4}
\providecommand{\BIBentryALTinterwordspacing}{\spaceskip=\fontdimen2\font plus
\BIBentryALTinterwordstretchfactor\fontdimen3\font minus
  \fontdimen4\font\relax}
\providecommand{\BIBforeignlanguage}[2]{{%
\expandafter\ifx\csname l@#1\endcsname\relax
\typeout{** WARNING: IEEEtran.bst: No hyphenation pattern has been}%
\typeout{** loaded for the language `#1'. Using the pattern for}%
\typeout{** the default language instead.}%
\else
\language=\csname l@#1\endcsname
\fi
#2}}
\providecommand{\BIBdecl}{\relax}
\BIBdecl

\bibitem{Weber2005}
S.~Weber, X.~Yang, J.~Andrews, and G.~de~Veciana, ``Transmission capacity of
  wireless ad hoc networks with outage constraints,'' \emph{{IEEE} Trans. Inf.
  Theory}, vol.~51, no.~12, pp. 4091--4102, Dec. 2005.

\bibitem{Baccelli2006}
F.~Baccelli, B.~Blaszczyszyn, and P.~Muhlethaler, ``An aloha protocol for
  multihop mobile wireless networks,'' \emph{{IEEE} Trans. Inf. Theory},
  vol.~52, no.~2, pp. 421--436, Feb. 2006.

\bibitem{UlukusEH2011c}
\BIBentryALTinterwordspacing
O.~Ozel, K.~Tutuncuoglu, J.~Yang, S.~Ulukus, and A.~Yener, ``Transmission with
  energy harvesting nodes in fading wireless channels: Optimal policies,''
  \emph{CoRR}, vol. abs/1106.1595, 2011. [Online]. Available:
  \url{http://arxiv.org/abs/1106.1595}
\BIBentrySTDinterwordspacing

\bibitem{YenerIntChan2011}
\BIBentryALTinterwordspacing
K.~Tutuncuoglu and A.~Yener, ``Sum-rate optimal power policies for energy
  harvesting transmitters in an interference channel,'' \emph{CoRR}, vol.
  abs/1110.6161, 2011. [Online]. Available:
  \url{http://arxiv.org/abs/1110.6161}
\BIBentrySTDinterwordspacing

\bibitem{Uysal2011}
\BIBentryALTinterwordspacing
H.~Erkal, F.~M. Ozcelik, and E.~Uysal-Biyikoglu, ``Optimal offline broadcast
  scheduling with an energy harvesting transmitter,'' \emph{CoRR}, vol.
  abs/1111.6502, 2011. [Online]. Available:
  \url{http://arxiv.org/abs/1111.6502}
\BIBentrySTDinterwordspacing

\bibitem{ChaporkarEH2011}
\BIBentryALTinterwordspacing
A.~Sinha and P.~Chaporkar, ``Optimal power allocation for renewable energy
  source,'' \emph{CoRR}, vol. abs/1110.2288, 2011. [Online]. Available:
  \url{http://arxiv.org/abs/1110.2288}
\BIBentrySTDinterwordspacing

\bibitem{VazeEH2011}
\BIBentryALTinterwordspacing
R.~Vaze, ``Competitive ratio analysis of online algorithms in energy harvesting
  communication system,'' \emph{CoRR}, Dec. 2011. [Online]. Available:
  \url{http://arxiv.org/abs/1112.5557}
\BIBentrySTDinterwordspacing

\bibitem{Simeone2011}
F.~Iannello, O.~Simeone, and U.~Spagnolini, ``Medium access control protocols
  for wireless sensor networks with energy harvesting,'' \emph{CoRR}, vol.
  abs/1112.2409, 2011.

\bibitem{VSharmaEH2010}
V.~Sharma, U.~Mukherji, V.~Joseph, and S.~Gupta, ``Optimal energy management
  policies for energy harvesting sensor nodes,'' \emph{{IEEE} Trans. Wireless
  Commun.}, vol.~9, no.~4, pp. 1326 --1336, Apr. 2010.

\bibitem{EphremidesEH2011}
J.~Jeon and A.~Ephremides, ``The stability region of random multiple access
  under stochastic energy harvesting,'' in \emph{ISIT}, 2011, pp. 1796--1800.

\bibitem{HuangEH2011}
K.~Huang, ``Throughput of wireless networks powered by energy harvesting,''
  \emph{CoRR}, vol. abs/1111.5799, 2011.

\bibitem{Hanawal2012}
M.~K. Hanawal, E.~Altman, and F.~Baccelli, ``Stochastic geometry based medium
  access games in mobile ad hoc networking,'' in \emph{IEEE International
  Conference on Computer Communication, Infocom 2012}, 25-29 Mar. 2012.

\bibitem{Stoyan1995}
D.~Stoyan, W.~Kendall, and J.~Mecke, \emph{Stochastic Gemoetry and its
  Applications}.\hskip 1em plus 0.5em minus 0.4em\relax John Wiley and Sons,
  1995.

\bibitem{Baccelli06}
F.~Baccelli, B.~Blaszczyszyn, and P.~Muhlethaler, ``An {A}loha protocol for
  multihop mobile wireless networks,'' \emph{{IEEE} Trans. Inform. Theory},
  vol.~52, no.~2, pp. 421--436, 2006.

\bibitem{MackenzieGT2003}
A.~MacKenzie and S.~Wicker, ``Stability of multipacket slotted aloha with
  selfish users and perfect information,'' in \emph{INFOCOM 2003}, vol.~3,
  Mar.-Apr. 2003, pp. 1583 -- 1590.

\bibitem{Kaynia2008}
M.~Kaynia and N.~Jindal, ``Performance of aloha and csma in spatially
  distributed wireless networks,'' in \emph{IEEE International Conference on
  Communications, 2008. ICC '08.}, may 2008, pp. 1108 --1112.

\bibitem{Kaynia2011}
M.~Kaynia, N.~Jindal, and G.~Oien, ``Improving the performance of wireless ad
  hoc networks through mac layer design,'' \emph{{IEEE} Trans. Wireless
  Commun.}, vol.~10, no.~1, pp. 240 --252, Jan. 2011.

\bibitem{VazeTDR2011}
R.~Vaze, ``Throughput-delay-reliability tradeoff with {ARQ} in wireless ad hoc
  networks,'' \emph{{IEEE} Trans. Wireless Commun.}, vol.~10, no.~7, pp. 2142
  --2149, Jul. 2011.

\bibitem{Ganti2009}
R.~Ganti and M.~Haenggi, ``Spatial and temporal correlation of the interference
  in aloha ad hoc networks,'' \emph{{IEEE} Commun. Lett.}, vol.~13, no.~9, pp.
  631 --633, sept. 2009.

\bibitem{Grimmett1980}
G.~Grimmett, \emph{Percolation}.\hskip 1em plus 0.5em minus 0.4em\relax
  Springer-Verlag, 1980.

\end{thebibliography}
\newpage
\begin{figure}
\centering
\includegraphics[width=4in]{}
\caption{Transition state probabilities of infinite state birth-death Markov process. }
\label{fig:BD}
\end{figure}

\begin{figure}
\centering
\includegraphics[width=4in]{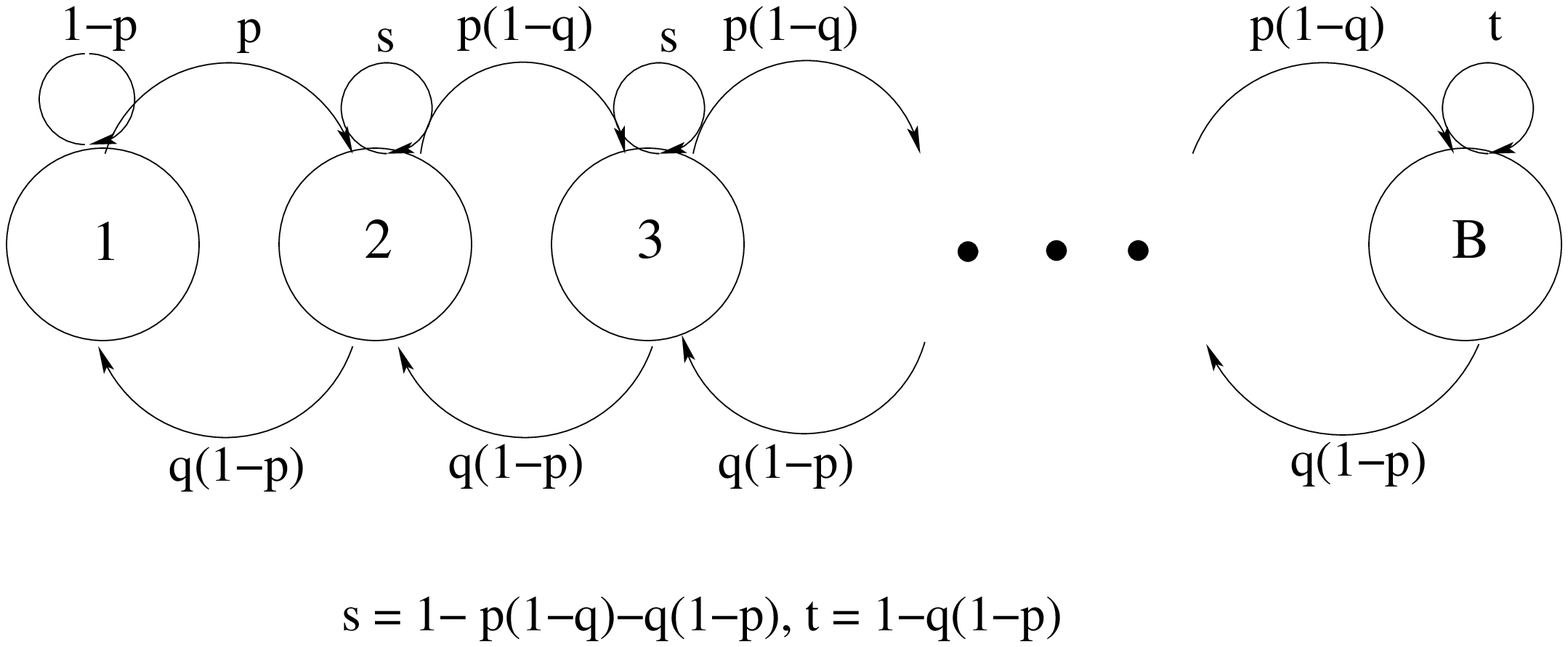}
\caption{Transition state probabilities of finite state birth-death Markov process. }
\label{fig:maxBD}
\end{figure}

\begin{figure}
\centering
\includegraphics[width=4in]{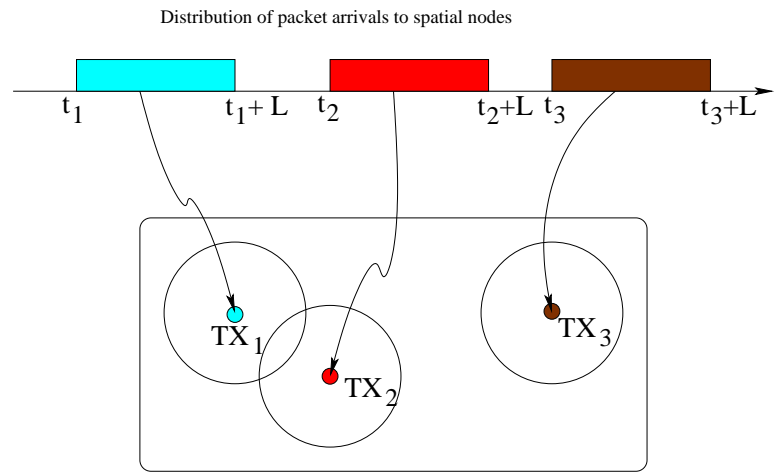}
\caption{Packet arrival model for CSMA MAP. }
\label{fig:csma}
\end{figure}

\begin{figure}
\centering
\includegraphics[width=4in]{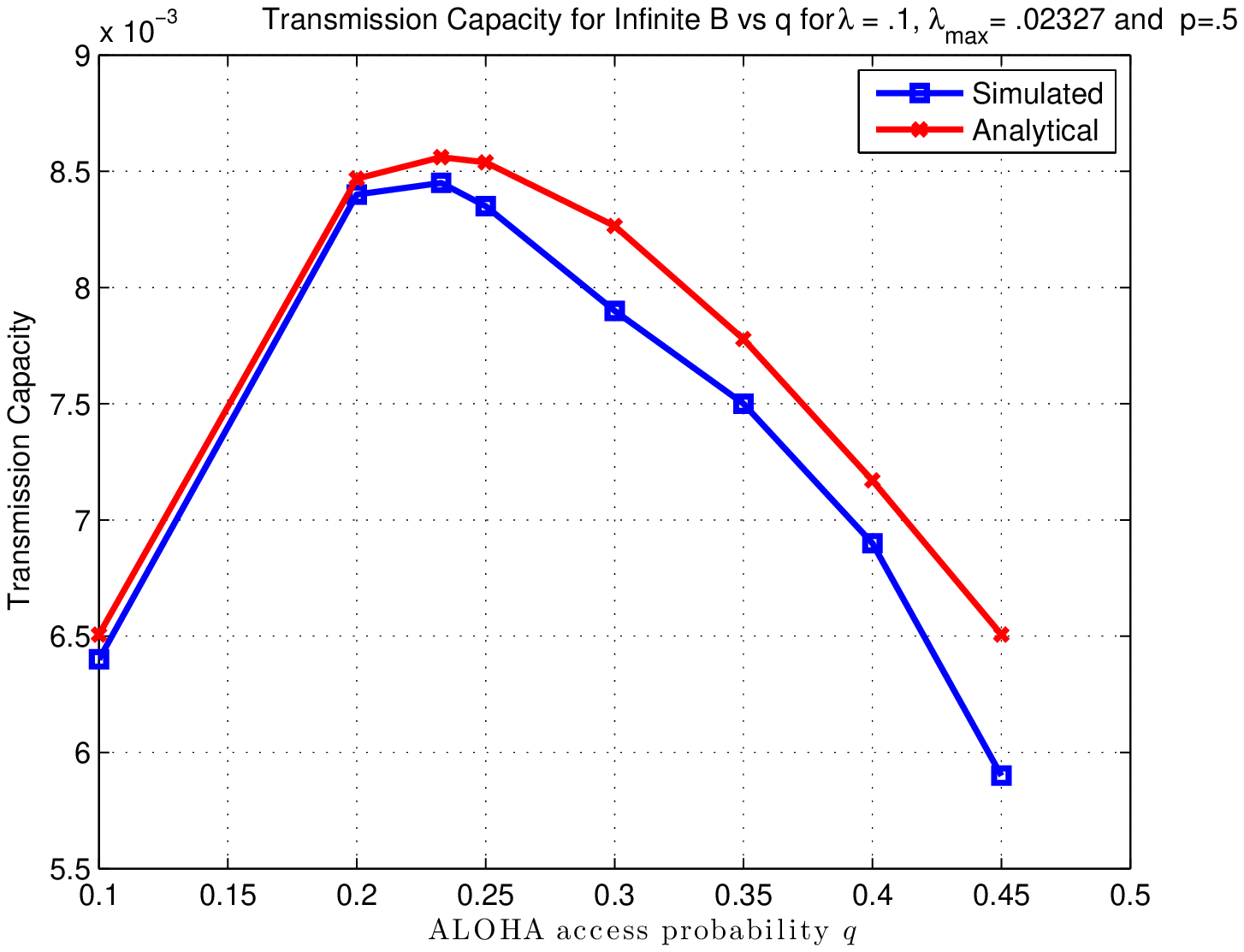}
\caption{Plot of transmission capacity for $B=\infty$ with $\frac{\lambda_{max}}{\lambda} \le p=0.5$.  }
\label{fig:TCinfBl}
\end{figure}

\begin{figure}
\centering
\includegraphics[width=4in]{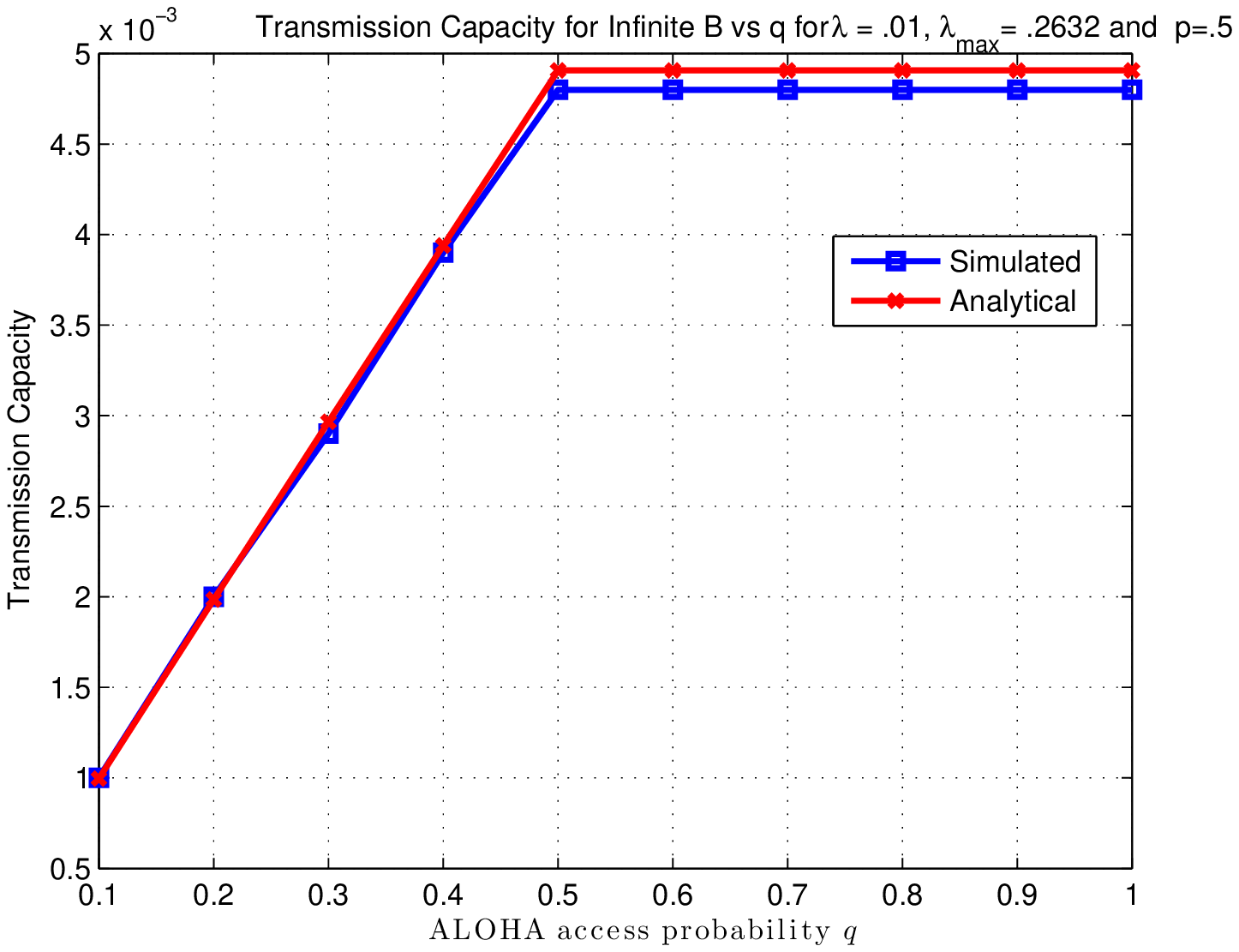}
\caption{Plot of transmission capacity for $B=\infty$ with $\frac{\lambda_{max}}{\lambda}>p=0.5$.  }
\label{fig:TCinfBg}
\end{figure}

\begin{figure}
\centering
\includegraphics[width=4in]{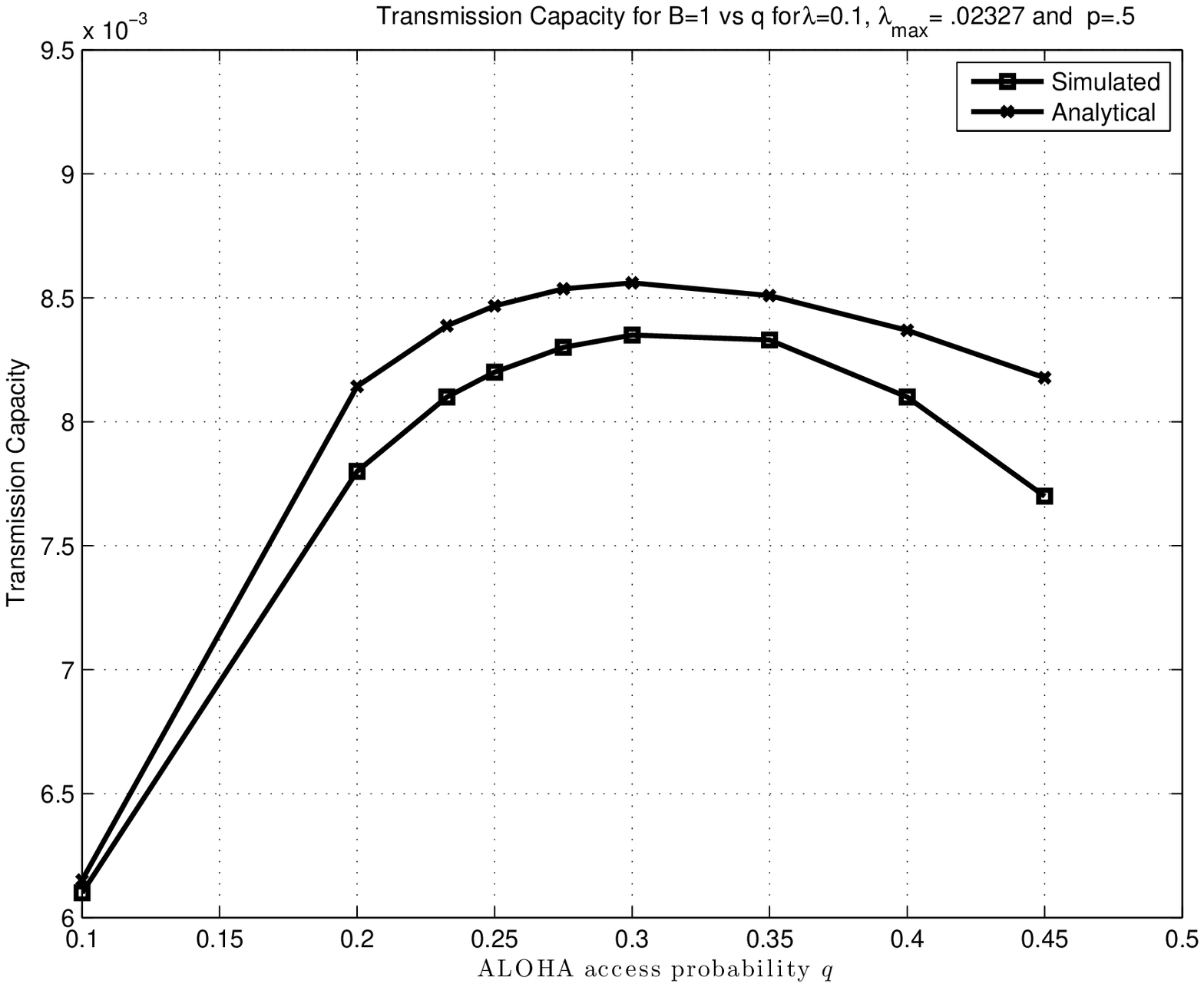}
\caption{Plot of transmission capacity for $B=1$ with  $p=0.5$.}
\label{fig:TCB1}
\end{figure}

\begin{figure}
\centering
\includegraphics[width=4in]{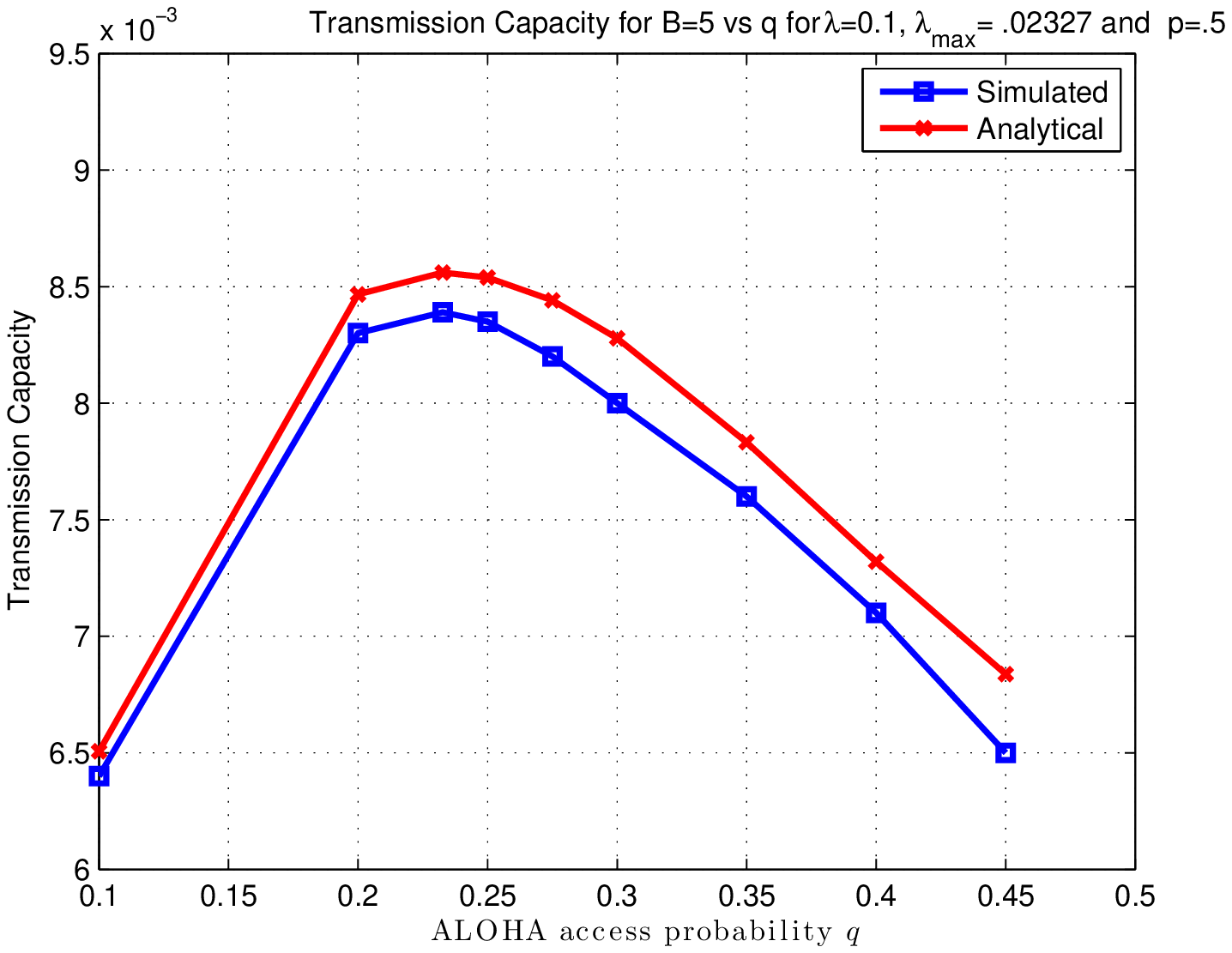}
\caption{Plot of transmission capacity for $B=5$ with  $p=0.5$. }
\label{fig:TCB5}
\end{figure}

\begin{figure}
\centering
\includegraphics[width=4in]{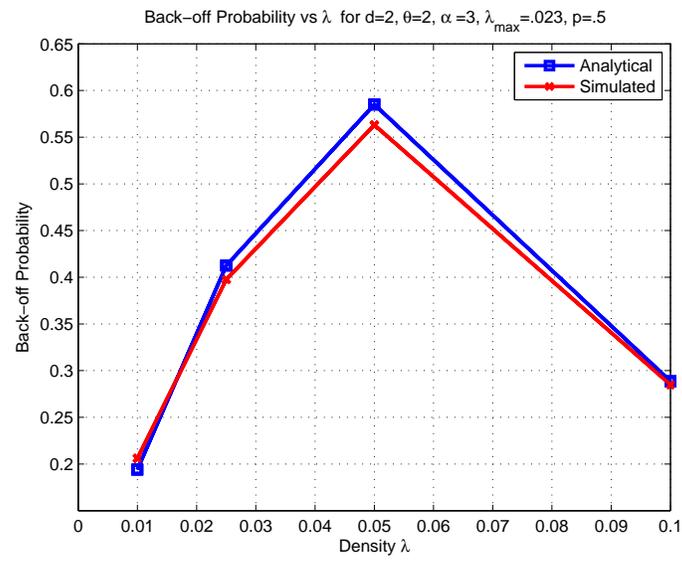}
\caption{Backoff probability as a function of $\lambda$.}
\label{fig:backoff}
\end{figure}

   \end{document}